\DeclareMathAlphabet{\pazocal}{OMS}{zplm}{m}{n}
\newtheorem{rem}{Remark}
\newcommand{\bb}{\mathbb}
\newcommand{\gs}{\sinh^\diamond}
\newcommand{\R}{{\mathbb{R}}}
\title{Edge modification criteria for enhancing \\ the communicability of digraphs}
\author{Francesca Arrigo\footnotemark[2] \and Michele Benzi\footnotemark[3]}
\begin{document}
\maketitle

\renewcommand{\thefootnote}{\fnsymbol{footnote}}
\footnotetext[2]{Department of Science and High Technology, 
University of Insubria, Como 22100, Italy (\email 
francesca.arrigo@uninsubria.it).}
\footnotetext[3]{Department of Mathematics and Computer Science, 
Emory University, Atlanta, Georgia 30322, USA (\email benzi@mathcs.emory.edu).
The work of this author was supported by National Science Foundation grant
DMS-1418889.} 
\renewcommand{\thefootnote}{\arabic{footnote}}

\begin{abstract}
We introduce new broadcast and receive communicability indices that 
can be used as global measures of how effectively information is
spread in a
directed network. Furthermore,
we describe fast and effective criteria for the selection of edges to be added
to (or deleted from) a given directed network so as to enhance these
network communicability measures. Numerical experiments illustrate the
effectiveness of the proposed techniques.
\end{abstract}

\begin{keywords}
network analysis,
directed network, hub, authority, edge modification, 
communicability, matrix function
\end{keywords}

\begin{AMS}
05C82, 15A16, 65F60
\end{AMS}

\pagestyle{myheadings}
\thispagestyle{plain}
\markboth{{\sc Francesca Arrigo and Michele Benzi}}{Edge modification criteria for digraphs}


\section{Introduction}
The concept of {\em network communicability}, first introduced  
by Estrada and Hatano in \cite{estradahatano2}, is being increasingly 
recognized as an important metric in the structural analysis of
networks. The communicability between two nodes $i$ and $j$ is defined
as the $(i,j)$th entry in the exponential of the adjacency matrix of the
network (or some scaled version of it). This choice can be justified
on graph-theoretic grounds based on the concept of  walks in a graph, and
also from a statistical physics point of view if we regard a network
as a system of coupled oscillators and consider the
associated thermal Green's function \cite{EHB12}. To date, there have been
a number of applications of communicability to the analysis of
real-world complex networks, a few of which are surveyed in \cite{EHB12}. 

In \cite{BK13} a new node centrality measure was introduced based on
the notion of {\em total communicability}, which measures how easily a given
node communicates with all the nodes in the network. As pointed out in
\cite{BK13}, this centrality measure is closely related to the notion
of {\em subgraph centrality} \cite{estradarodriguez05}, while being 
much easier to compute in the case of large networks.
In \cite{BK13} it was also proposed to use the sum of all the
total node communicabilities, possibly normalized by the number of nodes, 
as a {\em global} measure of how effective the network is at propagating
information among its nodes. This global index, referred to as
{\em total network communicability}, was further shown in 
\cite{AB14} to provide a good measure of the connectivity and robustness
of complex networks, while being much faster to compute than existing metrics
(such as the closely related {\em free energy} \cite{Estrada2007} and
{\em natural connectivity} \cite{Wu2010,Wu2012}). Indeed, the cost of
estimating the total network communicability using Lanczos-type
methods scales linearly in the
size of the graph for many types of networks \cite{BK13}.

Given that high communicability is often a highly desirable feature
(especially in the case of certain infrastructure, information, and
social networks) it is then natural to ask whether it is possible to
design networks which are at the same time highly sparse (in the sense 
of low average degree) and yet have high total communicability. (This
problem is analogous to that of constructing good expander graphs,
see \cite{HLW06}.) In \cite{AB14} we considered the problem of modifying
an existing sparse network so as to cause the total network communicability
to change in some desired way. The modification can be the
addition of a missing edge, the
deletion of an existing edge, or the 
rewiring of an existing edge. The goal could
be to increase the total communicability of the network as much as possible
(or nearly so), or to sparsify the network while minimizing the drop in
the value of the total communicability, subject to constraints on the
number of edge modifications allowed. In \cite{AB14}, several fast and
effective heuristics have been developed that achieve the desired goal.

A serious limitation of the notion of total communicability is that it
is not well suited to deal with directed networks, and indeed all the 
above mentioned papers deal exclusively with undirected networks.
The main reason is that in a directed graph each node plays two roles,
that of broadcaster and that of receiver of information. It is clear that
a single index cannot discriminate between these two forms of 
communication. In this paper, building in part on the ideas in \cite{BEK13},
we define two new measures of total network communicability, which quantify
how easily information is propagated on a given directed network when the 
two fundamental modes of communication
(broadcasting and receiving) have both to be accounted for.
Furthermore, we generalize the edge modification criteria in 
\cite{AB14} from the undirected
to the directed case, using the newly introduced communicability indices as
the objective functions.

Examples of real-world directed networks
include various information and citation networks, such as corpora of documents
linked to each other by directed edges, for examples hyperlinks between web pages or
in-line references between Wikipedia entries. In such networks, one may want to delete
edges that contribute little to the overall authoritativeness of the network.
In other cases, one may want to add edges from a hub to other nodes so as to
increase the overall efficiency of the network in leading to authoritative documents.
It is therefore of interest to introduce criteria for edge selection aimed at 
(approximately) optimizing  the global communicability properties of directed networks.

A few other authors have previously considered heuristics for edge manipulation
in directed networks.  In \cite{ZLZ}, edge modification criteria are introduced for tuning the
{\em synchronizability} of a network, a property of interest in many settings.
In \cite{Gates}, the authors have considered the potential impact of edge modification 
on epidemic dynamics on contact networks. It is quite possible that our edge
selection criteria may find application in these contexts.

The remainder of the paper is organized as follows. 
Section 2 contains some background notions about digraphs, the 
singular value decomposition,
and network centrality measures.
The new total communicability indices for digraphs are introduced in
section 3. The edge updating/downdating problem is described in section 4.
In section 5 we introduce the proposed heuristics for edge manipulation,
and in section 6 we discuss the result of numerical tests (including timings)
using four real-world directed networks.
Conclusive remarks are found in section 7.


\section{Background}\label{sec:background}
In this section we recall a few definitions and notations associated with 
graphs. 
Let $\pazocal{G}=(\pazocal{V},\pazocal{E})$ be a graph with $n=|\pazocal{V}|$ nodes and 
$m=|\pazocal{E}|$ edges (or links). 
If for all $i,j\in \pazocal{V}$ such that $(i,j)\in \pazocal{E}$ then also $(j,i)\in 
\pazocal{E}$, the graph is 
said to be {\it undirected}, as its edges can be traversed without 
following any prescribed ``direction.''  
On the other hand, if this condition does not hold, namely if there exists $(i,j)\in 
\pazocal{E}$ such that $(j,i)\not\in \pazocal{E}$, then 
the network is said to be {\it directed}. 
A directed graph is commonly referred to as a {\it digraph}. 
If $(i,j)\in \pazocal{E}$ in a digraph, we will write $i\rightarrow j$. 
As for the undirected case, an unweighted digraph can be represented by means of a binary matrix 
$A\in\bb{R}^{n\times n}$ 
whose entries $(A)_{ij}=a_{ij}$ 
are nonzero if and only if  $(i,j)\in \pazocal{E}$. 
An ordered pair $(i,j)\not\in \pazocal{E}$ will be called a {\it virtual edge}. 

Every node $i\in \pazocal{V}$ in a digraph has two types of degree, namely the {\it in-degree} and the 
{\it out-degree}; the first, denoted by $d_{in}(i)$, counts the number of edges of the form $*\rightarrow i$, i.e., the 
number of nodes in $\pazocal G$ from which it is possible to reach $i$ in one step. 
The {\it out-degree}, on the other hand, counts the number of nodes that can be reached from $i$ 
in one step, i.e., the number of edges of the form $i\rightarrow *$, and 
is denoted by $d_{out}(i)$.  
The degrees of a node $i$ can be 
computed as the $i$th entries of the following two vectors:
$$\left\{
\begin{array}{l}
{\bf d}_{out}=A{\bf 1}, \\
{\bf d}_{in}=A^T{\bf 1}=({\bf 1}^TA)^T.
\end{array}
\right. $$
Here ${\bf 1}$ is the vector of all ones and the superscript ``$T$'' 
denotes transposition. 

A {\it walk} of length $k$ is a sequence of (possibly repeated) 
nodes $i_1,i_2,\ldots,i_{k+1}$ such that $i_l\rightarrow i_{l+1}$ for all 
$l=1,\ldots,k$; a walk is said to be {\it closed} if $i_1=i_{k+1}$. 
A {\it path} is a walk with no repeated nodes. 
A digraph is said to be {\it strongly connected} if every two nodes in the network are 
connected through a path of finite length, while it is said to be {\it weakly connected} if 
this property holds when the directionality of the links is disregarded. 

Unless otherwise stated, every digraph in this paper is {\em simple},
i.e., unweighted, weakly connected, and without self-loops or multi-edges.

Let $A=U\Sigma V^T$ be a {\it singular value decomposition} (SVD) of the 
adjacency matrix $A$  \cite{HJ}. 
The matrix $\Sigma\in\bb{R}^{n\times n}$ is diagonal and its 
diagonal entries $(\Sigma)_{ii}=\sigma_i$ are the {\it singular values} of $A$. 
These elements are non-negative and ordered as 
$$\sigma_1\geq\sigma_2\geq\cdots\geq\sigma_r>\sigma_{r+1}=\cdots=\sigma_n=0,$$ 
where $r=\rank(A)$ is the rank of $A$. 
The matrices $U,V\in\bb{R}^{n\times n}$ are 
orthogonal and $U=[{\bf u}_1,{\bf u}_2,\ldots,{\bf u}_n]$ contains the {\it left singular vectors} of $A$, while 
 $V=[{\bf v}_1,{\bf v}_2,\ldots,{\bf v}_n]$ contains the {\it right singular vectors}. 
As is well known, $\Sigma$ is uniquely determined by $A$ but $U$ and $V$ are not.
Given an SVD of $A$, the corresponding
\emph{compact singular value decomposition} (CSVD) of the matrix $A$ is 
given by $A=U_r\Sigma_rV_r^T$, where 
$U_r=[{\bf u}_1,{\bf u}_2,\ldots,{\bf u}_r]\in\bb{R}^{n\times r}$ and 
$V_r=[{\bf v}_1,{\bf v}_2,\ldots,{\bf v}_r]\in\bb{R}^{n\times r}$ consist 
of the first $r$ columns of $U$ and $V$, respectively, 
and $\Sigma_r={\rm diag}(\sigma_1,\sigma_2,\ldots,\sigma_r)\in\bb{R}^{r\times r}$ 
corresponds to the leading $r\times r$ diagonal block 
of $\Sigma$.

\subsection{Hubs and authorities}
Let us briefly recall here a few definitions concerning the dual role every node plays in a digraph. 
In \cite{K99} Kleinberg stated that in directed networks there exist two types of important nodes: 
{\it hubs} and {\it authorities}. 
In particular, each node can be assigned a hub score and an authority score, which quantify its 
ability of playing these two roles. 
Good hubs are those nodes which better broadcast information, while good authorities are those which better receive 
information.
These two types of importance for nodes 
are strongly related through a recursive definition: the importance of a node as hub 
is proportional to the importance as authorities of the nodes it points to. 
Similarly, the importance of a node 
as authority depends on the importance as hubs of the nodes that point to it. 
This recursive definition is highlighted in the implementation of the HITS algorithm (see \cite{K99}), which makes use of the 
eigenvectors corresponding to the leading eigenvalue of the symmetric matrices $AA^T$ (the {\it hub matrix}) and $A^TA$ 
(the {\it authority matrix}) to rank the nodes as hubs and authorities, 
respectively.\footnote{For simplicity, unless otherwise specified,
in this and the next section we assume that the dominant eigenvalue
of $AA^T$ (and therefore of $A^TA$) is simple. This ensures the uniqueness (up
to scalar multiples) of the principal eigenvectors of these matrices, and
therefore of the hub and authority rankings. We refer the reader to \cite{EIHITS}
and to \cite[page 120]{Pagerank} for a discussion of this issue.}

Using the SVD or the CSVD of the adjacency matrix, 
it easily follows that $AA^T=U\Sigma^2U^T = U_r \Sigma_r^2U_r^T$ and $A^TA=V\Sigma^2V^T = V_r\Sigma_r^2V_r^T$. 
Therefore, the vector containing the hub scores is ${\bf u}_1$ while the vector containing the authority scores is ${\bf v}_1$.  
By the Perron--Frobenius theorem \cite{HJ}, from the non-negativity 
and irreducibility
of the hub and authority matrices it follows  
that these principal eigenvectors can be chosen so as to have positive components.  
Hence, ${\bf u}_1> 0$ will be called the {\it hub vector} and the vector 
${\bf v}_1> 0$ will be called the {\it authority  vector}. 

The powers of the hub and authority matrices are related to the number of particular types of walks in the 
digraph. 
Following \cite{BEK13,CEHT10}, we define an {\it alternating walk of length $k$ 
starting with an out-edge} 
as a list of nodes $i_1,i_2,\ldots, i_{k+1}$ such that there exists an edge $(i_l,i_{l+1})$ 
if $l$ is odd and an edge $(i_{l+1},i_{l})$ if $l$ is even. Hence, 
an alternating walk starting with an out-edge has the form
$$i_1\longrightarrow i_2\longleftarrow i_3\longrightarrow\ldots.$$
Similarly, an {\it alternating walk of length $k$ starting with an in-edge} 
is a list of nodes 
$i_1,i_2,\ldots, i_{k+1}$ such that
$$i_1\longleftarrow i_2\longrightarrow i_3\longleftarrow\ldots,$$
 i.e., such that there exists an edge $(i_l,i_{l+1})$ if $l$ is 
even and an edge $(i_{l+1},i_l)$ otherwise.

It is well known that the entries of powers of the adjacency matrix of a
graph can be used to 
count the number of walks of a certain length in the network. 
Similarly, it is known (see, e.g., \cite{CEHT10})
that $[AA^TA \ldots]_{ij}$ (where there are $k$ matrices being multiplied) 
counts the number of alternating walks of length $k$, starting with an out-edge, 
from node $i$ to node $j$,
whereas  $[A^TAA^T\ldots]_{ij}$ (where there are $k$ matrices being multiplied) 
counts the number of alternating walks of length $k$, starting with an in-edge, 
from node $i$ to node $j$.  Thus, $[(AA^T)^k]_{ij}$ and $[(A^TA)^k]_{ij}$ count 
the number of alternating walks of length $2k$.

In the next section, we will show how to use these quantities to define two global 
measures of how effectively the nodes in a digraph exchange information. 


\section{Total network communicabilities for digraphs}\label{sec:TC}
In \cite{BK13} a global measure of how easily information is diffused
across an (undirected) network has 
been defined in terms of the matrix exponential of the adjacency matrix. 
More in detail, recalling that the entries of the matrix exponential count the total number of walks of any 
length between two nodes weighting walks of length $k$ by a factor $\frac{1}{k!}$, the 
{\it total network communicability} has been defined as the sum of all the entries of 
this matrix: 
\begin{equation}\label{tc_undir}
TC(A):= \sum_{i=1}^n\sum_{j=1}^n \left [e^A\right ]_{ij} = {\bf 1}^T e^A {\bf 1},
\quad e^A = \sum_{k=0}^\infty \frac{1}{k!}A^k. 
\end{equation}
This quantity, possibly normalized by $n$, has been empirically shown to provide 
a good measure of how effectively the information flows 
along the network and of how well connected an undirected network is (see \cite{AB14,BK13}).

Let now $A$ be the adjacency matrix of a directed graph.
In analogy with the undirected case, we can
consider the 
total network communicability (\ref{tc_undir}).
In principle, this quantity (possibly normalized by $n$)
gives us an idea of how efficient the network
is, globally, at diffusing information. 
However, by following this approach we would be completely disregarding the 
twofold nature of nodes, which is one of the main features of digraphs.  

To better capture the dual behavior of nodes, we introduce two new global 
indices of communicability 
defined in terms of functions 
of the hub and authority matrices.

\begin{definition}\label{def:DItc}
Let $A$ be the adjacency matrix of a simple digraph and let
 $f:\R\longrightarrow\R$ be a function 
defined on the spectrum of $AA^T$.
The {\rm total hub $f$-communicability} of the digraph is defined as 
$$T_hC(A,f):={\bf 1}^Tf(AA^T){\bf 1}=\sum_{i=1}^nf(\sigma_i^2)({\bf 1}^T{\bf u}_i)^2.$$
Similarly, the {\rm total authority $f$-communicability} of the digraph is defined as
$$T_aC(A,f):={\bf 1}^Tf(A^TA){\bf 1}=\sum_{i=1}^nf(\sigma_i^2)({\bf 1}^T{\bf v}_i)^2.$$
\end{definition}

The motivation for using these quadratic forms as total communicability indices 
is that they 
exploit the recursive definition that relates hubs and authorities in a directed network. 
Assume that the function $f$ can be expressed as a power series of the form
\begin{equation}\label{eq:PS}
 f(t)=\sum_{k=0}^\infty  c_k t^k,\quad c_k\geq 0 \quad \forall k=0,1,\ldots
\end{equation}
Then, an easy computation shows that the total hub $f$-communicability, 
can be described in 
terms of the in-degree vector and of the authority matrix as  
$$T_hC(A,f)=c_0n+c_1\|{\bf d}_{in}\|_2^2+\sum_{k=1}^\infty c_{k+1}{\bf d}_{in}^T(A^TA)^k{\bf d}_{in},$$
thus highlighting the fact that the overall ability of nodes to broadcast information depends on 
their ability of receiving it. 
Note that due to the nonnegativity assumption on the
coefficients in (\ref{eq:PS}), $T_hC(A,f)$ is an inherently nonnegative quantity.

Analogous computations carried out on the total authority $f$-communicability show that this index  
can be completely described in terms of the out-degree vector and of the hub matrix:
$$T_aC(A,f)=c_0n+c_1\|{\bf d}_{out}\|_2^2+\sum_{k=1}^\infty c_{k+1}{\bf d}_{out}^T(AA^T)^k{\bf d}_{out},$$
thus showing that the overall ability of nodes to receive information depends on 
how well they are able to broadcast it. 
Note that $T_aC(A,f)$ is, again, 
always nonnegative.

\begin{rem}
{\rm We stress that
the total hub and authority $f$-communicabilities are invariant under graph isomorphism.
Indeed,
let $\pazocal{G}_1$ and $\pazocal{G}_2$ be two isomorphic graphs with associated adjacency matrices $A_1$ and $A_2$. 
Then there exists a permutation matrix $P$ such that $A_2 = P A_1 P^T$. 
Therefore,
\begin{align*}
T_hC(A_2,f) &= {\bf 1}^Tf(A_2A_2^T){\bf 1} = {\bf 1}^T f(P A_1 P^T P A_1^T P^T) {\bf 1} \\
	    &= {\bf 1}^T P f(A_1 A_1^T)P^T{\bf 1} = {\bf 1}^T f(A_1 A_1^T){\bf 1} = T_hC(A_1,f).
\end{align*}
Similarly,
\begin{align*}
T_aC(A_2,f) &= {\bf 1}^Tf(A_2^T A_2){\bf 1} = {\bf 1}^T P f(A_1^T A_1)P^T{\bf 1} \\
	    &= {\bf 1}^T f(A_1^T A_1){\bf 1} = T_aC(A_1,f).
\end{align*}
}
\end{rem}

In this paper we will focus on the total hub and  authority  $f$-communicabilities 
when the 
function $f(t)=\cosh(\sqrt{t})$ is used in the definition. 
The choice of the function $f(t)$ may seem unusual; however, we argue
that this choice is the most natural one if one wants to ``translate'' the
idea of  total communicability to the case of digraphs. 
Indeed, in the undirected case the total communicability was defined as the sum of all the entries of the matrix exponential. 
This index counts all the walks of any length taking place in the network, 
weighting walks of length $k$ by a factor $\frac{1}{k!}$. 
In the case of a digraph, we need to count all the alternating walks,  
again penalizing longer walks. This is accomplished by taking $f(t) =\cosh (\sqrt{t})$;  
for this choice of $f$ we obtain the {\it total hub communicability} 
as 
\begin{align*}
T_hC(A) &:={\bf 1}^T\left(\sum_{k=0}^\infty \frac{(AA^T)^k}{(2k)!}\right){\bf 1}
	={\bf 1}^T\left(\sum_{k=0}^\infty \frac{(\sqrt{AA^T})^{2k}}{(2k)!}\right){\bf 1}\\
	& ={\bf 1}^T\cosh(\sqrt{AA^T}){\bf 1} = T_hC(A,\cosh(\sqrt{t})),
\end{align*}
and, similarly, the {\it total authority communicability} as  

\begin{align*}
T_aC(A) &:={\bf 1}^T\left(\sum_{k=0}^\infty \frac{(A^TA)^k}{(2k)!}\right){\bf 1}
	={\bf 1}^T\left(\sum_{k=0}^\infty \frac{(\sqrt{A^TA})^{2k}}{(2k)!}\right){\bf 1} \\
	&={\bf 1}^T\cosh(\sqrt{A^TA}){\bf 1} = T_aC(A,\cosh(\sqrt{t})).
\end{align*}

A further justification for the choice of the function $f(t)$ comes 
from considering the following construction (see \cite{BEK13}).
Let 
\begin{equation}\label{eq:A_bipartite}
\mathscr{A}=\left(
\begin{array}{cc}
0 & A \\
A^T & 0
\end{array}
\right)
\end{equation}
be the adjacency matrix of the bipartite graph $\mathscr{G}=(\mathscr{V},\mathscr{E})$ 
obtained from the original digraph represented by $A$. 
This graph has $2n$ nodes forming the set $\mathscr{V}=\pazocal{V}\cup \pazocal{V}'$, 
where $\pazocal{V}$ is the original 
set of nodes and $\pazocal{V}'=\{1'=n+1, 2'=n+2,\ldots,n'=2n\}$ is 
a set of copies of the nodes in $\pazocal{G}=(\pazocal{V},\pazocal{E})$. 
The edges between the elements in $\mathscr{V}$ are undirected and 
$(i,j')\in\mathscr{E}$ with $i\in \pazocal{V}$ 
and $j'\in \pazocal{V}'$ if and only if $(i,j)\in \pazocal{E}$ in the original digraph.

Note that in the bipartite graph the first $n$ nodes contained in $\pazocal{V}$ can be seen as the original nodes of the digraph when they play their 
role of broadcasters of information, while the $n$ copies contained in $\pazocal{V}'$ represent the original nodes in their 
role of receivers. 
It is worth mentioning that the eigenvector of $\mathscr{A}$ corresponding to 
the leading eigenvalue $\lambda_1(\mathscr{A})=\sigma_1$ 
is the vector ${\bf q}_1= \left [\begin{array}{c} {\bf u}_1\\ {\bf v}_1 
\end{array} \right ]$.
The choice of $f(t)=\cosh(\sqrt{t})$ follows from the next result. 

\begin{proposition}{\rm \cite[Proposition 1]{BEK13}}
Let $\mathscr{A}$ be as in \eqref{eq:A_bipartite} and
let $A=U\Sigma V^T$ be an SVD of $A$. Then
\begin{equation}\label{eq:expmA_bipartite}
e^{\mathscr{A}}=\left(
\begin{array}{cc}
\cosh(\sqrt{AA^T}) & U\sinh(\Sigma)V^T\\
V\sinh(\Sigma)U^T & \cosh(\sqrt{A^TA})
\end{array}
\right).
\end{equation}
\end{proposition}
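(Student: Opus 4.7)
The plan is to diagonalize $\mathscr{A}$ using the given SVD and then read off the four blocks of $e^{\mathscr{A}}$ directly from the power series.

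First I would observe that, writing $A=U\Sigma V^T$, we have
$$\mathscr{A}=\begin{pmatrix} U & 0 \\ 0 & V\end{pmatrix}\begin{pmatrix} 0 & \Sigma \\ \Sigma & 0\end{pmatrix}\begin{pmatrix} U^T & 0 \\ 0 & V^T\end{pmatrix},$$
where the outer factor $Q:=\mathrm{diag}(U,V)$ is orthogonal. Hence $e^{\mathscr{A}}=Q\,e^{B}\,Q^T$ with $B=\begin{pmatrix} 0 & \Sigma \\ \Sigma & 0\end{pmatrix}$, which reduces the problem to exponentiating the $2n\times 2n$ block matrix $B$ whose off-diagonal blocks are the diagonal matrix $\Sigma$.

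Next I would compute $e^B$ by splitting the defining series into even and odd powers. A direct induction shows $B^{2k}=\mathrm{diag}(\Sigma^{2k},\Sigma^{2k})$ and $B^{2k+1}=\begin{pmatrix} 0 & \Sigma^{2k+1} \\ \Sigma^{2k+1} & 0\end{pmatrix}$, so the even series collapses to $\mathrm{diag}(\cosh\Sigma,\cosh\Sigma)$ and the odd series collapses to the off-diagonal block matrix with entries $\sinh\Sigma$. Therefore
$$e^B=\begin{pmatrix} \cosh\Sigma & \sinh\Sigma \\ \sinh\Sigma & \cosh\Sigma\end{pmatrix}.$$

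Conjugating by $Q$ then yields the $(1,1)$ block $U\cosh(\Sigma)U^T$, the $(2,2)$ block $V\cosh(\Sigma)V^T$, and off-diagonal blocks $U\sinh(\Sigma)V^T$ and $V\sinh(\Sigma)U^T$, matching the right-hand side of \eqref{eq:expmA_bipartite}. The final identification uses that $AA^T=U\Sigma^2 U^T$ and $A^TA=V\Sigma^2V^T$, so that $\sqrt{AA^T}=U\Sigma U^T$ and $\sqrt{A^TA}=V\Sigma V^T$ are the principal (positive semidefinite) square roots; then the functional-calculus identity $\cosh(U\Sigma U^T)=U\cosh(\Sigma)U^T$ (and its $V$-analogue) gives the claimed diagonal blocks. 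There is no real obstacle here; the only care needed is in handling the square root when $A$ is rank-deficient, but since $\Sigma\geq 0$ the principal square root is unambiguous and the functional calculus for $\cosh$ on $\sqrt{AA^T}$ is well defined regardless of rank.
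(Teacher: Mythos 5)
Your proof is correct. Note that the paper itself gives no proof of this statement: it is quoted verbatim as Proposition 1 of the cited reference [BEK13], so there is no in-paper argument to compare against. Your route --- conjugating $\mathscr{A}$ into $B=\bigl(\begin{smallmatrix}0&\Sigma\\ \Sigma&0\end{smallmatrix}\bigr)$ by the orthogonal matrix $Q=\mathrm{diag}(U,V)$ and then splitting the exponential series of $B$ into even and odd powers --- is sound, and every step checks out. A minor simplification worth knowing: the standard argument (and the one in [BEK13]) splits the series of $e^{\mathscr{A}}$ directly, using $\mathscr{A}^{2k}=\mathrm{diag}\bigl((AA^T)^k,(A^TA)^k\bigr)$ and $\mathscr{A}^{2k+1}=\bigl(\begin{smallmatrix}0&(AA^T)^kA\\ (A^TA)^kA^T&0\end{smallmatrix}\bigr)$, which avoids the conjugation step entirely. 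Also, your caution about the square root in the rank-deficient case, while not wrong, is unnecessary: $\cosh(\sqrt{t})=\sum_{k\ge 0}t^k/(2k)!$ is an entire function of $t$, so $\cosh(\sqrt{AA^T})$ is defined directly as a power series in $AA^T$ and no choice of square root ever needs to be made.
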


An important feature of this matrix is that its entries are nonnegative. Thus, these quantities can be used 
to describe the importance of nodes and how well they communicate when they are acting as broadcasters or receivers of information 
in the graph~\cite{BEK13}. 
Indeed, the entries of the two diagonal blocks $\cosh(\sqrt{AA^T})$ 
and $\cosh(\sqrt{A^TA})$ provide centrality and 
communicability indices for nodes and pairs of nodes 
when they are all seen as playing the same role in the network. 
More in detail, the diagonal entries of the first diagonal block give the centralities for the nodes in the original 
network when they are seen as broadcasters of information (hubs). 
Likewise, the diagonal of the second block contains the centralities for 
the nodes in their role of receivers (authorities). 
Similarly to the off-diagonal entries of the matrix exponential of an undirected graphs, the off-diagonal entries 
of these diagonal blocks measure how well two nodes, both acting as 
broadcasters (resp., receivers), exchange information. 

As for the off-diagonal blocks in~\eqref{eq:expmA_bipartite}, 
they contain information concerning how nodes exchange information 
when one node is playing the role of broadcaster (resp., receiver) and the other 
is acting as a receiver (resp., broadcaster). 

Thus, the total hub communicability and total authority communicability defined as $T_hC(A)={\bf 1}^T\cosh(\sqrt{AA^T}){\bf 1}$ and 
$T_aC(A)={\bf 1}^T\cosh(\sqrt{A^TA}){\bf 1}$, respectively, 
account for the overall ability of the network of exchanging information when all its nodes are playing the same role of broadcasters 
($T_hC(A)$) 
or receivers ($T_aC(A)$).

\section{Edge modification strategies}\label{sec:strategies}

The main goal of this work is to develop heuristics that can be used 
to add/remove edges from a digraph in order to tune 
the total hub and/or authority communicability.
In particular, we will call {\it update} of $(i,j)\not\in \pazocal{E}$ the addition of this virtual edge to the
network; we want to perform this operation in such a way that this addition 
increases as much as possible the quantities of interest. Note that, 
due to the nonnegativity condition in (\ref{eq:PS}),
the addition of an edge can only increase the total communicabilities $T_hC(A)$ and $T_aC(A)$. 

The operation of removing an edge from the network will be referred 
to as the {\it downdate} 
of an edge. 
Our aim is to select the edge to be removed in such a way that 
the target functions $T_hC$ and $T_aC$ are not penalized too much, 
i.e., their values do not 
drop significantly
as edges are removed.\footnote{Clearly, our approach can be adapted so as to obtain 
the opposite effect if so desired. Indeed, we can adapt our algorithms to select 
edges whose removal heavily penalizes the 
target functions.} 
Both these operations can be described as rank-one modifications of the adjacency 
matrix $A$ of the digraph $\pazocal G$ or, equivalently, as rank-two modifications of the 
adjacency matrix $\mathscr{A}$ of the associated bipartite graph $\mathscr{G}$. 

We first introduce some edge centrality measures that can be used to 
rank the (virtual) edges in the digraph; we 
then use the derived rankings to select which modifications to perform. 
More in detail, a virtual edge having a large centrality is considered 
important and thus its addition 
is expected to highly enhance the total communicabilities. 
On the other hand, we will remove edges that have a low ranking, 
since they are not expected to carry a lot of information; 
thus, their removal is not expected to heavily penalize the hub and 
authorities communicabilities of the network.

The resulting updating and downdating strategies will be similar in spirit 
to those adopted in the undirected case \cite{AB14}.
However, as explained in more detail in the next subsection, we cannot simply
apply the heuristics in \cite{AB14} to the bipartite graph $\mathscr{G}$,
since doing so could lead to possible loss of structure.

\subsection{Bipartite graphs vs.~digraphs} \label{sec:vs}
In this section we will describe two different ways of tackling the problem of selecting $K$ edge modifications to be 
performed on the network in order to tune the communicability indices $T_hC(A)$ and  
$T_aC(A)$.

First we describe how to rank the edges. 
A priori, there are two natural approaches. 
Indeed, given the definitions of communicabilities in terms of the 
function $f(t)=\cosh(\sqrt{t})$, we can 
either work on the matrix $\mathscr{A}$ or on the original adjacency matrix $A$.
In the first case, we would adapt to the matrix $\mathscr{A}$ the 
techniques developed for the
undirected case which performed best according to the results in \cite{AB14}, taking
into account the need to preserve the zero-nonzero block structure of $\mathscr{A}$.
The second approach, on the other hand, requires the introduction of new 
edge centrality measures specially developed for the directed case.

We will show that the new edge centrality measures for digraphs allow us to 
develop heuristics that perform as well as or better than the techniques for undirected 
graphs applied to $\mathscr{A}$. 

We want to stress here that the set of (virtual) edges among which we 
select the modifications is the same in 
both cases, since one wants to preserve the antidiagonal block structure 
(\ref{eq:A_bipartite})
of $\mathscr{A}$. 
Indeed, if a new edge were to destroy the structure, it could not be 
``translated'' into a new directed 
edge for the original digraph.


\subsection{Edge centralities: undirected case}\label{sec:und} 
In the following, we will briefly recall the edge centrality measures 
introduced in \cite{AB14} 
that showed the best performance. 
These will be used on $\mathscr{A}$ to tackle the updating and downdating problems. 

Let $M$ be the adjacency matrix of a simple, undirected graph. 
We call the {\it edge eigenvector centrality} of the (virtual) edge $(i,j)$ the quantity:
$$^e{\pazocal EC}(i,j)=q_1(i)q_1(j),$$
where $q_1(i)$ is the $i$th entry of the Perron vector ${\bf q}_1$ of the matrix $M$ 
(see \cite{HJ,B87}). 
We call {\it edge total communicability centrality} of $(i,j)$ the quantity:
$$^e{\pazocal TC}(i,j)=(e^M{\bf 1})_i(e^M{\bf 1})_j.$$

Let $\lambda_1> \lambda_2\ge \cdots \ge \lambda_n$ denote the eigenvalues of $M$. 
It has been pointed out that, when the spectral gap $\lambda_1 - \lambda_2$ is 
large enough, then these two centrality measures 
provide very similar rankings, especially when the attention is restricted
to the top edges; on the other hand, different rankings may be
obtained when the gap is small \cite{AB14,BK15}.


\subsection{Edge centralities: directed case}\label{sec:dir}
We now want to define two new edge centrality measures that take into account the directionality of links 
and that can be computed by directly working on the unsymmetric adjacency matrix $A$.

In \cite{AB14} it has been pointed out that one of the main factors in 
the evolution of the total 
communicability is the dominant eigenvalue $\lambda_1$ 
of the matrix involved in its computation.  
This is clear since for an undirected graph with adjacency matrix $A$ the total
communicability can be expressed as
$$TC(A) = \sum_{i=1}^n e^{\lambda_i}\alpha_i^2, \quad \alpha_i = {\bf 1}^T{\bf x}_i,$$
hence the dominant contribution to $TC(A)$ comes from the first term of the sum.
Thus, heuristics that increase the spectral radius of $A$ as much as possible
will likely be effective also when the goal is to increase the total 
communicability as much as possible.
For example, one of the methods found in \cite{AB14} to 
have the best performance relies on 
the edge eigenvector centrality, which is indeed directly connected 
to the change that occurs in the magnitude of the leading eigenvalue 
(see \cite{AB14} for more details). 

Transferring this idea to $T_hC(A)$ and $T_aC(A)$, it follows that we want to 
define (if possible) an edge centrality measure that allows us to control 
the change in the leading singular value of $A$, which corresponds to the square root of 
the leading eigenvalue of $AA^T$ and $A^TA$.

\begin{proposition}\label{prop:eig_change}
Let $A$ be the adjacency matrix of a graph. 
Let ${\bf u}_1$ and ${\bf v}_1$ be the hub and authority vectors, respectively. 
Let $\sigma_1$ be the leading singular value of $A$. 
Consider the adjacency matrix of the graph obtained after the addition of the virtual edge $(i,j)$: 
$\tilde{A}=A+{\bf e}_i{\bf e}_j^T$.
Then the leading eigenvalue $\tilde{\sigma}_1^2$ of the new hub and authority matrices  
satisfies
\begin{equation}\label{eq:eig_up}
\tilde{\sigma}_1^2\geq\sigma_1^2+2\sigma_1 u_1(i)v_1(j)+\max\left\{u_1(i)^2,v_1(j)^2\right\}.
\end{equation}
The inequality is strict if $AA^T$ is irreducible.

Moreover, let $\widehat{A}=A-{\bf e}_i{\bf e}_j^T$ denote 
the adjacency matrix obtained after the 
removal of the existing edge $i\rightarrow j$. 
Then the leading eigenvalue $\widehat{\sigma}_1^2$ of the new hub and authority matrices 
satisfies
\begin{equation}\label{eq:eig_down}
\sigma_1^2 \geq
\widehat{\sigma}_1^2\geq\sigma_1^2-2\sigma_1 u_1(i)v_1(j)+\max\left\{u_1(i)^2,v_1(j)^2\right\}.
\end{equation}
The first inequality is strict if $\widehat{A}\widehat{A}^T$ is irreducible.
\end{proposition}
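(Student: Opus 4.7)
My plan is to prove all four inequalities by means of the Rayleigh--Ritz variational characterization of the largest eigenvalue of a symmetric positive semidefinite matrix, applied once to the hub matrix $\tilde A\tilde A^T$ (resp.\ $\widehat A\widehat A^T$) and once to the authority matrix $\tilde A^T\tilde A$ (resp.\ $\widehat A^T\widehat A$), with $\mathbf{u}_1$ and $\mathbf{v}_1$ used as trial vectors.

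First I would expand the rank-one update $\tilde A=A+\mathbf{e}_i\mathbf{e}_j^T$ into the symmetric perturbations
\begin{equation*}
\tilde A\tilde A^T \;=\; AA^T + A\mathbf{e}_j\mathbf{e}_i^T + \mathbf{e}_i\mathbf{e}_j^T A^T + \mathbf{e}_i\mathbf{e}_i^T,\qquad
\tilde A^T\tilde A \;=\; A^TA + A^T\mathbf{e}_i\mathbf{e}_j^T + \mathbf{e}_j\mathbf{e}_i^T A + \mathbf{e}_j\mathbf{e}_j^T.
\end{equation*}
Substituting the unit vector $\mathbf{u}_1$ into the Rayleigh quotient of the first matrix and using $A^T\mathbf{u}_1=\sigma_1\mathbf{v}_1$ together with $\mathbf{u}_1^T AA^T\mathbf{u}_1=\sigma_1^2$, a one-line computation yields
\begin{equation*}
\tilde\sigma_1^2 \;\geq\; \mathbf{u}_1^T\tilde A\tilde A^T\mathbf{u}_1 \;=\; \sigma_1^2 + 2\sigma_1\, u_1(i)v_1(j) + u_1(i)^2.
\end{equation*}
Substituting $\mathbf{v}_1$ into the Rayleigh quotient of the second matrix and using $A\mathbf{v}_1=\sigma_1\mathbf{u}_1$ yields the analogous bound with $v_1(j)^2$ in place of $u_1(i)^2$. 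Taking the larger of the two gives (\ref{eq:eig_up}).

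For (\ref{eq:eig_down}) the algebra is identical but with the perturbation $-\mathbf{e}_i\mathbf{e}_j^T$: the linear cross term flips sign, while the quadratic corrections $\mathbf{e}_i\mathbf{e}_i^T$ and $\mathbf{e}_j\mathbf{e}_j^T$ keep their positive sign because they arise from squaring the perturbation. The auxiliary monotonicity $\widehat\sigma_1\leq\sigma_1$ follows from the fact that $0\leq\widehat A\leq A$ entrywise and that the largest singular value of a nonnegative matrix equals the Perron eigenvalue of the associated symmetric nonnegative matrix $\mathscr A$ built as in (\ref{eq:A_bipartite}), which is monotone in the entries by Perron--Frobenius.

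The main obstacle is the strict-inequality assertion. My plan is to invoke Perron--Frobenius: irreducibility of $AA^T$ is preserved by the addition of a new edge, so $\tilde\sigma_1^2$ is a simple eigenvalue of $\tilde A\tilde A^T$ with a strictly positive Perron eigenvector $\tilde{\mathbf{u}}_1$. Equality in the Rayleigh bound would force $\mathbf{u}_1$ to coincide (up to sign) with $\tilde{\mathbf{u}}_1$, and subtracting $AA^T\mathbf{u}_1=\sigma_1^2\mathbf{u}_1$ from $\tilde A\tilde A^T\mathbf{u}_1=\tilde\sigma_1^2\mathbf{u}_1$ yields
\begin{equation*}
(\tilde\sigma_1^2-\sigma_1^2)\,\mathbf{u}_1 \;=\; u_1(i)\,A\mathbf{e}_j + \bigl(\sigma_1 v_1(j)+u_1(i)\bigr)\,\mathbf{e}_i,
\end{equation*}
whose right-hand side is supported only on row $i$ and on the rows $k$ with $(k,j)\in\pazocal E$. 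Since $\mathbf{u}_1>0$ componentwise, this cannot match a positive multiple of $\mathbf{u}_1$ except in a degenerate configuration which, I expect, must be ruled out by a short combinatorial argument exploiting that $(i,j)$ is a genuine virtual edge. An entirely analogous computation, using the eigenvector of $\widehat A\widehat A^T$, handles the strictness in (\ref{eq:eig_down}).
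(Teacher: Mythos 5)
Your proof follows the same route as the paper's: apply the Rayleigh--Ritz characterization to both $\tilde A\tilde A^T$ and $\tilde A^T\tilde A$ with trial vectors $\mathbf{u}_1$ and $\mathbf{v}_1$, obtain the two lower bounds $\sigma_1^2+2\sigma_1 u_1(i)v_1(j)+u_1(i)^2$ and $\sigma_1^2+2\sigma_1 u_1(i)v_1(j)+v_1(j)^2$, take the larger, and settle strictness and the monotonicity $\widehat\sigma_1\le\sigma_1$ by Perron--Frobenius. Your sketch of the strictness argument is in fact more explicit than the paper's (which only asserts that the conclusion ``easily follows'' from Perron--Frobenius theory), and the combinatorial step you leave open closes easily: the right-hand side of your displayed identity vanishes in position $j$ because the graph has no self-loops, while $u_1(j)>0$ and $\tilde\sigma_1^2-\sigma_1^2>0$, a contradiction.
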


\begin{proof}

Using the Rayleigh--Ritz Theorem (see, for example, \cite{HJ}) we get:
\begin{align*}
\tilde{\sigma}_1^2=\lambda_1(\tilde{A}\tilde{A}^T) &= \max\limits_{\|{\bf z}\|_2=1}{\bf z}^T\left(\tilde{A}\tilde{A}^T\right){\bf z}\\
				 		   &\geq {\bf u}_1^T\left(\tilde{A}\tilde{A}^T\right){\bf u}_1 \\
						   &= \left\|\left(A^T+{\bf e}_j{\bf e}_i^T\right){\bf u}_1\right\|_2^2 \\
						   &= \|\sigma_1{\bf v}_1 + u_1(i){\bf e}_j\|_2^2 \\
				                   &= \sigma_1^2 + 2\sigma_1u_1(i)v_1(j) + u_1(i)^2. 
\end{align*}
Similarly, by working on the authority matrix one gets:
$$
\tilde{\sigma}_1^2=\lambda_1(\tilde{A}^T\tilde{A})  \geq {\bf v}_1^T\left(\tilde{A}^T\tilde{A}\right){\bf v}_1 
						  			                    = \sigma_1^2 + 2\sigma_1u_1(i)v_1(j) + v_1(j)^2.
$$
From these inequalities, and from 
basic facts from Perron--Frobenius theory,
the conclusion easily follows. Similar arguments
can be used to prove \eqref{eq:eig_down}. 
\qquad 
\end{proof}

Relations~\eqref{eq:eig_up} and \eqref{eq:eig_down} motivate the following definition.
\begin{definition}\label{def:eHITS}
Let $A$ be the adjacency matrix of a directed graph. Let ${\bf u}_1$ and ${\bf v}_1$ be 
its HITS hub and authority vectors, respectively. 
Then the {\rm edge HITS centrality} of the existing/virtual edge $(i,j)$ is defined as
$$^eHC(i,j)=u_1(i)v_1(j).$$
\end{definition}

Notice that when $A$ is symmetric this definition reduces to that of 
edge eigenvector centrality: $^eEC(i,j)=x_1(i)x_1(j)$, where ${\bf x}_1$ is the 
eigenvector associated with the leading eigenvalue of $A$.

\begin{rem}\label{rem:orientation}
{\rm Inequalities~\eqref{eq:eig_up} and~\eqref{eq:eig_down} and, consequently, definition \ref{def:eHITS} 
suggest that there is a ``prescribed direction" one has to follow when introducing a new edge centrality measure. 
Indeed, it is required to use the centrality as broadcaster for the 
source node $i$ and the centrality as receiver for the target node $j$ when evaluating the importance of 
the (virtual) edge $i\rightarrow j$.
This observation confirms a natural intuition and motivates the usage of this same ``orientation'' in all 
our definitions and methods (cf.~section \ref{sec:Heuristics}).}
\end{rem}

The next edge centrality measure we want to define relies on the use of the total communicability of nodes. 
Recall that in the case of an undirected network represented by the symmetric adjacency matrix $M$, the 
total communicability of node $i$ is defined as $(e^M{\bf 1})_i$. 
This quantity describes how well node $i$ communicates with the whole network. 
As discussed in section \ref{sec:TC}, 
this centrality measure is well defined for any adjacency matrix, 
in particular for the adjacency matrices of digraphs, and 
indeed the row and column sums of $e^A$ 
do provide in some cases meaningful measures of how well nodes broadcast information 
(row sums of $e^A$) and 
how good they are at receiving information (column sums of $e^A$). 
However, the expressions describing these quantities do not provide 
information on the alternating walks 
taking place in the digraph and, thus, miss a crucial feature of
communication in real-world directed networks.

For this reason, we introduce here new definitions for the total 
communicabilities of nodes 
which can be shown to be directly connected to their twofold nature.
In order to do so, we make use of the concept of
{\em generalized matrix function} first introduced in \cite{HBI73}.
Let $A = U_r\Sigma_r V_r^T\in {\bb R}^{n\times n}$ be a matrix of rank $r$,
and let $f:{\bb R}\longrightarrow {\bb R}$ be a 
function such that $f(\sigma_i)$ exists for all 
$i=1,2,\ldots,r$, so that the matrix function $f(\Sigma_r)
={\rm diag}(f(\sigma_1), f(\sigma_2), \ldots ,f(\sigma_r))$ is well defined. 

Following \cite{HBI73}, we define
the generalized matrix function 
$f^\diamond:{\bb R}^{n\times n}\longrightarrow {\bb R}^{n\times n}$ as  
\begin{equation*}
f^\diamond(A)=U_rf(\Sigma_r)V_r^T=\sum_{k=1}^rf(\sigma_k){\bf u}_k{\bf v}_k^T.
\end{equation*}

It is easy to check that 
\begin{equation}\label{gen_fun}
f^\diamond(A)=\left(\sum_{k=1}^r\frac{f(\sigma_k)}{\sigma_k}{\bf u}_k{\bf u}_k^T\right)A
=A\left(\sum_{k=1}^r\frac{f(\sigma_k)}{\sigma_k}{\bf v}_k{\bf v}_k^T\right).
\end{equation}

These equalities show that a generalized matrix function can be expressed in terms 
of $A$ and either $AA^T$ or $A^TA$.
Therefore, the entries of $f^\diamond(A)$ --- and hence its row/column sums --- 
can be used as meaningful measures 
of importance in the directed case, {\em provided that they are all non-negative}.

It turns out that, in general, this is not the case for the 
generalized matrix exponential. 
Indeed, consider for example the generalized matrix exponential of the  adjacency matrix
\begin{equation}\label{ex}
A = \left(
\begin{array}{cccc}
 0 & 0 & 1 & 0 \\
 1 & 0 & 0 & 1 \\
 0 & 1 & 0 & 0 \\
 0 & 1 & 0 & 0
\end{array}
\right)
\end{equation}
associated with the digraph in Fig.~1. It turns out that 
its $(3,1)$ and $(4,4)$ 
entries are negative, and thus these quantities cannot be interpreted as 
communicability/centrality measures. 

\begin{figure}[t!]
        \centering
        \label{fig:ex}
        \includegraphics[width=.25\textwidth]{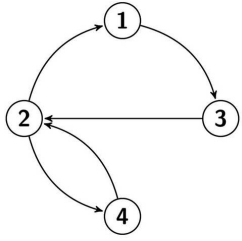}
\caption{The digraph associated with adjacency matrix described in~\eqref{ex}.}
\end{figure}

If we instead consider the generalized hyperbolic sine:
$$\gs(A)=U_r\sinh(\Sigma_r)V_r^T=\sum_{k=1}^r\sinh(\sigma_k){\bf u}_k{\bf v}_k^T,$$ 
we have that this matrix corresponds to the top right block of the 
matrix $e^{\mathscr{A}}$; indeed, we can rewrite equation (\ref{eq:expmA_bipartite}) as
\begin{equation}\label{eq:new_expmA_bipartite}
e^{\mathscr{A}}=\left(
\begin{array}{cc}
\cosh(\sqrt{AA^T}) & \gs(A)\\
\gs(A)^T & \cosh(\sqrt{A^TA})
\end{array}
\right).
\end{equation}
Hence, the entries of $\gs(A)$ are all non-negative, 
and can be used to quantify how
 well nodes communicate when they are playing 
different roles. More precisely, reasoning in terms of alternating walks 
shows that the $(i,j)$th entry of this matrix describes how well node 
$i$ exchanges information 
with node $j$ when the first is playing the role of hub and the latter 
that of authority.
Using this generalized matrix function we can introduce two new centrality 
measures for nodes in digraphs. 

\begin{definition}\label{def:nodeTC}
Let $A=U_r\Sigma_r V_r^T$ be the adjacency matrix of a directed network. 
We call {\rm total hub communicability} of node $i$ the quantity
\begin{subequations}
\begin{equation*}
C_h(i)={\bf e}_i^T\, \gs(A)\, {\bf 1}=
\sum_{k=1}^r\sinh(\sigma_k)({\bf v}_k^T{\bf 1})u_k(i)
\end{equation*}
and {\rm total authority communicability} of node $j$ the quantity
\begin{equation*}
C_a(j)={\bf 1}^T\, \gs(A)\, {\bf e}_j=
\sum_{k=1}^r\sinh(\sigma_k)({\bf u}_k^T{\bf 1})v_k(j)
\end{equation*}
\end{subequations}
\end{definition}

These quantities correspond to row or column sums of the off-diagonal block of 
$e^{\mathscr{A}}$; therefore, $C_h(i)$ 
quantifies the ability of node $i$ --- playing the role of hub --- to communicate with all the nodes in 
the network, when they are all acting as receivers of information. 
Similarly, $C_a(j)$ accounts for the ability of node $j$ as an authority to receive information from all the 
nodes in the graph, when they are acting as broadcasters of
information.\footnote{
The reader is referred  once again to \cite{BEK13} for a more detailed discussion of the
interpretation of the entries in the off-diagonal blocks of $e^{\mathscr{A}}$.
}
This feature highlights the fact that these definitions are better suited than  
$e^A{\bf 1}$ and $({\bf 1}^Te^A)^T$ when it comes to working on digraphs. 
This result is summarized in the following proposition.

\begin{proposition}\label{prop:HAnodeTC}
Let $A$ be the adjacency matrix of a graph $\pazocal{G} = (\pazocal{V},\pazocal{E})$. 
The total hub communicability of node $i\in \pazocal{V}$ can be written as
\begin{subequations}
\begin{equation}\label{eq:HnodeTC1}
C_h(i)=\sum_{k=1}^r\frac{\sinh(\sigma_k)}{\sigma_k}{\bf e}_i^T({\bf u}_k{\bf u}_k^T){\bf d}_{out} = 
\sum_{k=1}^r\frac{\sinh(\sigma_k)}{\sigma_k}({\bf v}_k^T{\bf 1})\sum_{\substack{\ell\in \pazocal{V}\\i\rightarrow \ell}}v_k(\ell).
\end{equation}
Similarly, the total authority communicability of node $j\in \pazocal{V}$ can be expressed as
\begin{equation}\label{eq:AnodeTC1}
C_a(j)=\sum_{k=1}^r\frac{\sinh(\sigma_k)}{\sigma_k}{\bf d}_{in}^T({\bf v}_k{\bf v}_k^T){\bf e}_j = 
\sum_{k=1}^r\frac{\sinh(\sigma_k)}{\sigma_k}({\bf 1}^T{\bf u}_k)\sum_{\substack{\ell\in \pazocal{V}\\ \ell\rightarrow j}}u_k(\ell).
\end{equation}
\end{subequations}
\end{proposition}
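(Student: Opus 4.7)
The plan is to exploit the two alternative factorizations of a generalized matrix function given in equation~\eqref{gen_fun}. Specializing that identity to $f(t)=\sinh(t)$, we have
\begin{equation*}
\gs(A)=\left(\sum_{k=1}^r\frac{\sinh(\sigma_k)}{\sigma_k}{\bf u}_k{\bf u}_k^T\right)A
=A\left(\sum_{k=1}^r\frac{\sinh(\sigma_k)}{\sigma_k}{\bf v}_k{\bf v}_k^T\right),
\end{equation*}
so both $C_h(i)={\bf e}_i^T\gs(A){\bf 1}$ and $C_a(j)={\bf 1}^T\gs(A){\bf e}_j$ can be read off in two ways depending on which factorization we use.

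For \eqref{eq:HnodeTC1}, first I would insert the left factorization into $C_h(i)={\bf e}_i^T\gs(A){\bf 1}$, pull the scalar coefficients out of the sum, and use $A{\bf 1}={\bf d}_{out}$. This yields the first equality directly. For the second equality, I instead use the right factorization, obtaining
$$C_h(i)=\sum_{k=1}^r\frac{\sinh(\sigma_k)}{\sigma_k}({\bf e}_i^TA{\bf v}_k)({\bf v}_k^T{\bf 1}).$$
The final step is to observe that ${\bf e}_i^TA$ is the $i$th row of $A$, so ${\bf e}_i^TA{\bf v}_k=\sum_{\ell\,:\,i\rightarrow\ell}v_k(\ell)$, which matches the claimed expression.

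The derivation of \eqref{eq:AnodeTC1} is symmetric. Starting from $C_a(j)={\bf 1}^T\gs(A){\bf e}_j$ and using the right factorization together with ${\bf 1}^TA={\bf d}_{in}^T$ gives the first equality. Using the left factorization instead produces
$$C_a(j)=\sum_{k=1}^r\frac{\sinh(\sigma_k)}{\sigma_k}({\bf 1}^T{\bf u}_k)({\bf u}_k^TA{\bf e}_j),$$
and the observation that $A{\bf e}_j$ is the $j$th column of $A$, so ${\bf u}_k^TA{\bf e}_j=\sum_{\ell\,:\,\ell\rightarrow j}u_k(\ell)$, yields the second equality.

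There is no genuine obstacle here: the proof is essentially an application of \eqref{gen_fun} combined with the elementary identities $A{\bf 1}={\bf d}_{out}$ and ${\bf 1}^TA={\bf d}_{in}^T$. The only point that requires a small amount of care is the combinatorial interpretation of ${\bf e}_i^TA{\bf v}_k$ and ${\bf u}_k^TA{\bf e}_j$ as sums of singular vector entries over the out-neighbors of $i$ and the in-neighbors of $j$, respectively, which is precisely what makes the formulas highlight the hub/authority duality emphasized in Remark~\ref{rem:orientation}.
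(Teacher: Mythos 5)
Your proposal is correct and follows essentially the same route as the paper's own proof: both equalities in each formula are obtained by inserting the two factorizations of $\gs(A)$ from \eqref{gen_fun}, using $A{\bf 1}={\bf d}_{out}$ (resp.\ ${\bf 1}^TA={\bf d}_{in}^T$), and reading off ${\bf e}_i^TA{\bf v}_k$ as a sum over out-neighbors (resp.\ ${\bf u}_k^TA{\bf e}_j$ over in-neighbors). The paper omits the symmetric argument for \eqref{eq:AnodeTC1}, which you spell out, but there is no substantive difference.
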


\begin{proof}
Using the first equality in~\eqref{gen_fun} one gets that: 
\begin{align*}
C_h(i) 	 & = {\bf e}_i^T\left( \sum_{k=1}^r\frac{\sinh(\sigma_k)}{\sigma_k} {\bf u}_k{\bf u}_k^T\right) A{\bf 1}
	  =  \sum_{k=1}^r\frac{\sinh(\sigma_k)}{\sigma_k} {\bf e}_i^T \left({\bf u}_k{\bf u}_k^T\right) {\bf d}_{out}, 
\end{align*}
which proves the first equality of~\eqref{eq:HnodeTC1}. 
To prove the second, we apply the second equality in \eqref{gen_fun}:
\begin{align*}
C_h(i) 	 & =  ({\bf e}_i^T A)\left( \sum_{k=1}^r\frac{\sinh(\sigma_k)}{\sigma_k}{\bf v}_k{\bf v}_k^T\right){\bf 1}
	  =  \sum_{k=1}^r\frac{\sinh(\sigma_k)}{\sigma_k}({\bf v}_k^T{\bf 1}) A_{i,:}{\bf v}_k,  
\end{align*}
where $A_{i,:}$ is the $i$th row of the adjacency matrix $A$. 
The conclusion then follows from the fact that $A_{i,:}{\bf v}_k = \sum_{ i\rightarrow \ell}v_k(\ell)$.
The proof of~\eqref{eq:AnodeTC1} goes along the same lines and is thus omitted.\qquad
\end{proof}

Before proceeding with the introduction of the associated edge centrality measure, we want to show with a small example that these measures of hub and authority centrality are 
indeed informative. 
Consider as an example the graph in Fig.~1. 
It is intuitive that node $2$ should be given the highest score both 
as hub and as authority by any reasonable centrality measure. 
Consequently, the authority scores for nodes $1$ and $4$ should be 
the same and higher than that of node $3$ because these nodes  
are directly pointed to from node $2$, which is the best hub in the graph. 
For a similar reason, nodes $3$ and $4$ should be ranked higher than 
node $1$ when considering a hub score, since they directly point 
to node $2$, which is the most important authority.

\begin{table}[t]
\begin{center}
\footnotesize
\caption{Centrality measures for the nodes in the graph represented in Fig.~1 
and described by the adjacency matrix~\eqref{ex}.}
\label{tab:ex}
\begin{tabular}{ccccccccc}
\hline
NODE & $d_{out}(i)$ & $d_{in}(i)$ & $u_1(i)^2$ & $v_1(i)^2$  & $C_h(i)$ & $C_a(i)$\\
\hline
1    & 1            &  1	  & .0000    & .3333     & 1.1752 & 1.3683 \\
2    & 2            &  2	  & .5000    & .3333     & 2.7366 & 2.7366 \\
3    & 1            &  1	  & .2500    & .0000     & 1.3683 & 1.1752 \\
4    & 1            &  1	  & .2500    & .3333     & 1.3683 & 1.3683 \\
\hline
\end{tabular}
\end{center}
\end{table}
  
Table~\ref{tab:ex} contains the centrality scores for the 
four nodes when the in/out-degree, HITS centrality\footnote{To compute 
these scores, we initialize the HITS algorithm with the constant authority 
vector with 2-norm equal to 1; see \cite{K99,BEK13}.}, 
and the total hub/authority communicability are considered. 
Clearly, the in/out-degrees of the nodes do not capture the picture we just described  
since they cannot discriminate between nodes 1, 3, and 4.
This happens because the degree centralities take into account only local information 
about how nodes propagate 
information in the network. 

Concerning HITS,
the rankings given by the hub scores conform to our expectations, but those
given by the authority scores do not, since they are unable to identify node 2
as the most authoritative one (it is tied with nodes 1 and 4). 
Another problem with HITS is that the rankings will
depend in general on the initial vector, since for this example the matrices
$AA^T$ and $A^TA$ are reducible (this also explains the occurrence of zero entries in
the hub and authority vectors). Note that this is a non-issue for both $C_h(i)$
and $C_a(i)$; most importantly, however, these two measures succeed in identifying
the ``correct" relative rankings for the hubs and authorities in this digraph.

These observations motivate the introduction of 
a new edge centrality measure.
\begin{definition}\label{def:eTC}
Let $A$ be the adjacency matrix of a simple digraph. 
Then the {\rm edge total communicability centrality} of the existing/virtual edge $(i,j)$ is defined as
$$^egTC(i,j)=C_h(i)C_a(j),$$
where $C_h(i)$ and $C_a(j)$ are the total hub communicability of node $i$ and the total authority communicability of node $j$, respectively. 
\end{definition} 

Note that when the difference between the two largest singular values $\sigma_1-\sigma_2$ is ``large enough'', the 
quantities $C_h(i)$ and $C_a(j)$ are essentially determined  by $\sinh(\sigma_1)\|{\bf v}_1\|_1u_1(i)$ 
and $\sinh(\sigma_1)\|{\bf u}_1\|_1v_1(j)$, respectively. 
When this condition is satisfied we expect agreement between the rankings for the edges provided by the edge HITS and total 
communicability centrality measures, at least when the attention is restricted to the top ranked edges.

It is natural to ask how the edge centrality measure just introduced is related to 
the edge total communicability centrality applied to the undirected graph $\mathscr{G}$.
For the centrality of the (virtual) edge 
$(i,j)$ we obtain 
\begin{equation}\label{eq:diff}
{^e{\pazocal TC}(i,j')} - [{^egTC(i,j)}] =  \phi(i,j) - \left (\cosh(\sqrt{AA^T}){\bf 1}\right )_i
\left (\cosh(\sqrt{A^TA}){\bf 1}\right )_j
\end{equation}
where ${^e{\pazocal TC}(i,j')}$ is the edge total communicability of $(i,j')$ in the bipartite graph $\mathscr{G}$, 
$j'=j+n$, and 
$$\phi(i,j) = (e^{\mathscr{A}}{\bf 1})_i\left (\cosh(\sqrt{A^TA}){\bf 1}
\right )_j+(e^{\mathscr{A}}{\bf 1})_{j'}\left (\cosh(\sqrt{AA^T})
{\bf 1}\right )_i.$$
The difference in \eqref{eq:diff} 
is positive and it may be so large that the edge 
selected when working on the digraph could well
be different from that selected when working on the associated bipartite network, 
thus leading to different results  for the two techniques. 
As we will
see in the section on numerical experiments, the two criteria may indeed lead to
different results.

\begin{rem}
{\rm Concerning the actual computation of the quantities that 
occur in Definition~\ref{def:nodeTC},  one can either exploit the
relationship (\ref{eq:new_expmA_bipartite}) between $e^{\mathscr{A}}$ and $\gs(A)$
and use
standard methods for computing the matrix exponential \cite{fun_matrix} or,
if the matrix $A$ is too large to build and work with $\mathscr{A}$ explicitly, 
one can obtain estimates of the quantities of interest using 
the Golub--Kahan algorithm \cite{GM,GVL}. 
Indeed, $\gs(A)$ can be rewritten as  
\begin{equation*}\label{gen_fun2}
\gs(A)=\sinh(\sqrt{AA^T})(\sqrt{AA^T})^\dagger A=A(\sqrt{A^TA})^\dagger \sinh(\sqrt{A^TA}),
\end{equation*}
where ``$\dag$" denotes the Moore--Penrose pseudoinverse,
and one can obtain estimates of the desired
row and column sums by applying 
Golub--Kahan bidiagonalization with an appropriate starting vector 
($A{\bf 1}$ or $A^T{\bf 1}$. respectively).
We plan to investigate these and other computational issues in future work. 
The test matrices used in this paper are small enough that we could 
form and manipulate the matrix $\mathscr{A}$ explicitly. 
Therefore, we expect the heuristics based on the two edge centrality 
measures ${^egTC}(i,j)$ and ${^e\pazocal{TC}}(i,j')$ to perform 
similarly in terms of timings.
}
\end{rem}


\section{Heuristics}\label{sec:Heuristics}

In this section we describe the methods we will use to perform the 
numerical tests presented in section~\ref{sec:test}.
For both the updating and downdating problem, we will first rank the (virtual) edges
 using a variety of edge centrality measures; for large
graphs we may consider only a subset of all possible candidate edges, as
discussed below.
For the updating problem, we will then select the top ranked virtual edges, 
while for the the downdating problem we will select the edges having the lowest 
centrality rankings. 
Given a budget of $K$ modifications to be performed, we can proceed in one of two ways.
We can either perform one edge modification at a time and then recalculate all the
necessary centrality scores right afterwards, or we can perform all the 
modifications at once, without recalculation. This latter 
approach will correspond to the {\tt .no} variants 
of the algorithms. In the undirected case, the latter approach was found to be
essentially as effective as the former (even for relatively large $K$) while
being dramatically less expensive in terms of computational effort; see \cite{AB14}.

As we already mention in section \ref{sec:vs},
we can either work on the bipartite network associated with the digraph 
or directly on the original network. 
When working on the original graph, addition/deletion of an edge
corresponds to rank-one updates/downdates 
to the corresponding adjacency matrix $A$. 

The methods used are labeled as follows:
\begin{itemize}
\item {\tt eig(.no)}. Let ${\bf x}_1$ be the right eigenvector associated with the 
leading eigenvalue of $A$ (assumed to be simple) 
and ${\bf y}_1$ be the left eigenvector associated with the same eigenvalue. 
Generalizing the definition for the edge eigenvector centrality given in section \ref{sec:und}, we can define in 
the case of digraphs: $$^eEC(i,j):=x_1(i)y_1(j).$$

This quantity has been recently used in \cite{TPEFF} to devise algorithms 
aimed at increasing as much as possible the 
leading eigenvalue of $A$ when edges are added to the network.

\item {\tt TC(.no)}. Here we use the total communicability $e^A{\bf 1}$. 
The score assigned to a (virtual) edge $(i,j)$ is: $$^eTC(i,j):=\left(e^A{\bf 1}\right)_i\left({\bf 1}^Te^A\right)_j.$$
This heuristic generalizes to digraphs the analogous one for
undirected graphs (cf.~{\tt nodeTC(.no)} in \cite{AB14}).
\item {\tt HITS(.no)}. Each (virtual) edge is given a score in terms of 
the quantities introduced in Definition~\ref{def:eHITS}:
$$^eHC(i,j)=u_1(i)v_1(j)$$
\item {\tt gTC(.no)}. This heuristic is based on the edge total communicability defined in terms of the 
generalized hyperbolic sine (see definition~\ref{def:eTC}). 
The (virtual) edge $(i,j)$ is assigned the score: $$^egTC(i,j) = C_h(i)C_a(j),$$
where $C_h(i)=(\gs(A){\bf 1})_i$ and $C_a(j)=(\gs(A^T){\bf})_j$.
\end{itemize}

The first two methods (with their variants) generalize to the case of digraphs the techniques which performed best  
in the undirected case. 
Notice that we have used the broadcaster score for the source node and the receiver score for the target node 
(see Remark~\ref{rem:orientation}). 

Next, we
consider the bipartite network associated to the matrix $\mathscr{A}$ 
defined in \eqref{eq:A_bipartite}. 
The criteria we use to select the modifications are 
based on the edge centrality measures described in section \ref{sec:und}. 
We will label the methods as follows:
\begin{itemize}
\item {\tt b:eig(.no)}. We use the eigenvector centrality of edges; 
the edge eigenvector centrality of the (virtual) edge $(i,j')$ 
is defined as 
$$^e{\pazocal EC}(i,j')=q_1(i)q_1(j'),$$
where ${\bf q}_1$ is the Perron vector of $\mathscr{A}$.
\item {\tt b:TC(.no)}. This is based on the total communicability centrality of edges: each (virtual) edge $(i,j')$ is assigned the score:
$${^e{\pazocal TC}(i,j')}=\left(e^\mathscr{A}{\bf 1}\right)_i\left(e^\mathscr{A}{\bf 1}\right)_{j'}.$$

\item {\tt b:deg}. This simple heuristic is
 equivalent to the {\tt degree} method in \cite{AB14}. Each (virtual) edge is assigned a score of the form: 
$$d(i)+d(j'), \quad i\in\pazocal{V} \text{ and } j'\in\pazocal{V}',$$
where $d(i) = (\mathscr{A}{\bf 1})_i$ is the degree of node $i$ in 
the network represented by $\mathscr{A}$.
\end{itemize}

\begin{rem}\label{rem:deg}
{\rm 
We do not provide a method that generalizes {\tt degree} in~\cite{AB14} to the case of digraphs since it would 
coincide with the heuristic {\tt b:deg} just introduced. 
Indeed, the straightforward generalization would require to assign to the (virtual) edge 
$i\rightarrow j$ the score $d_{out}(i)+d_{in}(j)$. 
However, it is easy to see that 
$d_{out}(i) = d(i)$ where $i\in{\pazocal V}$ and $d_{in}(j) = d(j')$ where $j'\in\pazocal{V}'$, and thus this 
technique would be indistinguishable from {\tt b:deg}. 
Note that this technique is the optimal one if we want to optimize the sum $T_hC(A) + T_aC(A)$ 
and we use the second order Maclaurin approximations 
$\cosh(\sqrt{X}) \approx I + \frac{X}{2}$, with $X = AA^T,A^TA$ to 
compute the the total hub and authority communicabilities.  
}
\end{rem}

When working on the matrix associated with the bipartite graph, each edge 
modification of the corresponding network will 
cause a rank-two change in $\mathscr{A}$.
We want to stress once again that the set of (virtual) edges among 
which to select the modifications is the same whether we 
work on $A$ or on $\mathscr{A}$ and corresponds to the set of 
(virtual) edges of the graph $\pazocal G$, or a subset of it. 
For large networks, the set of virtual edges among 
which to select the updates may be too large to be exhaustively searched. 
In this work we used the whole set for all the networks used in the
experiments except the largest one, namely cit-HepTh (see table \ref{tab:dataset}). 
For this problem, we restrict the search to a subset of the set of
all virtual edges constructed as follows.
We first rank in descending order the nodes of $\mathscr{G}$ using 
the eigenvector centrality. 
This results in a ranking of $2n$ elements: the nodes in $\pazocal{V}$ and their copies. 
Next, for each $i=1,\ldots,n$ we remove from the list the one element between $i$ and its 
copy $i'$ which has the lowest rank. 
We now have a list of length $n$ which includes either one element (element of $\pazocal{V}$) or 
its copy (element of $\pazocal{V}'$). 
We thus relabel all the copies, if present, with the label
of the corresponding node in $\pazocal{V}$. 
The resulting list contains all the $n$ nodes in the original graph. It  
has been obtained considering, for each node, its best performance 
between its role as hub and its 
role as authority in the network.
Finally,
we take the induced subgraph corresponding to the top $10\%$ of the nodes in 
this list.
The set of virtual edges in this subgraph is the set we exhaustively search.

\subsection{Rank-two modifications}\label{ssec:rank2}

Before discussing the results obtained by applying our techniques to select rank-one updates of the matrix $A$, we 
want to briefly discuss how these techniques may be modified in order to make them suitable to select symmetric rank-two 
modifications of the unsymmetric adjacency matrix. 
This approach goes beyond the scope of this paper, but it is worth some discussion. 
Indeed, in real world applications one may conceivably want to 
add (or delete) two-directional edges between nodes in a digraph
in order to tune its total communicabilities. 
In this setting, the downdating and updating problems aim at the same goals as before, 
but the sets in which one searches 
for modifications are different from those used in our original problems. 
Indeed, the updates will be selected in the set $\{(i,j)\in  \pazocal{V}\times \pazocal{V} | (i,j),(j,i)\not\in\pazocal{E}\}$, 
while the downdates will be selected among the edges in $\{(i,j)\in  \pazocal{V}\times \pazocal{V} | (i,j),(j,i)\in\pazocal{E}\}$. 

We start by discussing the case of the degree and of the edge HITS centrality. 
The results obtained for these two approaches will motivate the generalization of the other techniques. 
As we have observed in Remark~\ref{rem:deg}, the degree strategy works as the optimal strategy when we consider a second order approximation 
of the terms in the sum $T_hC(A)+T_aC(A)$. 
By carrying out the same computation, replacing a rank-one update of the adjacency matrix with a rank-two update, one finds that the most 
natural generalization requires that the quantities used to  rank the (virtual) edges by the method based on the degree of nodes are 
$$[d_{in}(i) + d_{out}(j)] + [d_{out}(i) +d_{in}(j)].$$
A similar results can be obtained if we want to adapt {\tt HITS} to handle rank-two updates. Indeed, to rank the undirected (virtual) edges one 
may use the quantities 
$$^eHITS(i,j) + {^eHITS(j,i)} .$$
This follows from the application  to 
the matrices $(A + {\bf e}_i{\bf e}_j^T + {\bf e}_j{\bf e}_i^T)(A + {\bf e}_i{\bf e}_j^T + {\bf e}_j{\bf e}_i^T)^T$ and 
$(A + {\bf e}_i{\bf e}_j^T + {\bf e}_j{\bf e}_i^T)^T(A + {\bf e}_i{\bf e}_j^T + {\bf e}_j{\bf e}_i^T)$ 
of the same techniques used in the proof of Proposition~\ref{prop:eig_change}. 

From these simple results, it follows that the quantities used by the other heuristics to handle rank-two modifications of the adjacency matrix 
of a digraph have the form 
$$^eC(i,j) + {^eC(j,i)},$$
where $^eC$ is one among the edges centralities used in the previous section to work in the directed case.


\section{Numerical tests}\label{sec:test}

\begin{table}[t]
\centering
\footnotesize
\caption{Description of the dataset.}
\label{tab:dataset}
\begin{tabular}{ccccccc}
\hline
NETWORK	          & $n$    & $m$     & $\tau$   & $\sigma_1$ & $\sigma_2$ & $\sigma_1-\sigma_2$   \\
\hline 
GD95b             &  73    &   96    &     5160 &  4.79      & 4.37       & 0.428                 \\
Comp.~Complexity  & 857    & 1596    &   731996 & 10.93      & 9.87       & 1.05                  \\
Abortion          & 2262   & 9624    &  5104728 & 31.91      & 20.04      & 5.87                  \\
Twitter           & 3656   & 188712  & 13176871 & 189.15     & 120.54     & 68.71                 \\ 
cit-HepTh         & 27400  & 352547  &  3730367 & 85.16      & 69.31      & 15.85                 \\   
\hline
\end{tabular}
\end{table}

The numerical tests have been performed on five networks, which come from three sources. 
The small network GD95b comes from the University of Florida sparse matrix
collection \cite{DataF} and represents entries in a graph drawing context.
The citation network cit-HepTh, the largest one in our data set,
also comes from the University of Florida sparse matrix collection
\cite{DataF}.
The networks Abortion and Computational Complexity are small web graphs consisting of web sites on the topic of abortion and  computational complexity. They are available online 
at \cite{DataT}. 
Finally, the network Twitter can be found at \cite{DataG}; it contains mentions and retweets of some part of the social network Twitter. 
Table \ref{tab:dataset} summarizes some properties of the networks in our dataset; namely, it contains the number of nodes $n$ and edges $m$, 
the two largest singular values of the adjacency matrix $\sigma_1$, $\sigma_2$, their difference $\sigma_1-\sigma_2$, and the number of virtual edges $\tau$. 
An exception is the network cit-HepTh, for which 
$\tau$ is the number of 
virtual edges contained in the subgraph of the network constructed as described 
at the end of 
section~\ref{sec:Heuristics}. 

\begin{figure}[t]
\centering
\includegraphics[width=.83\textwidth]{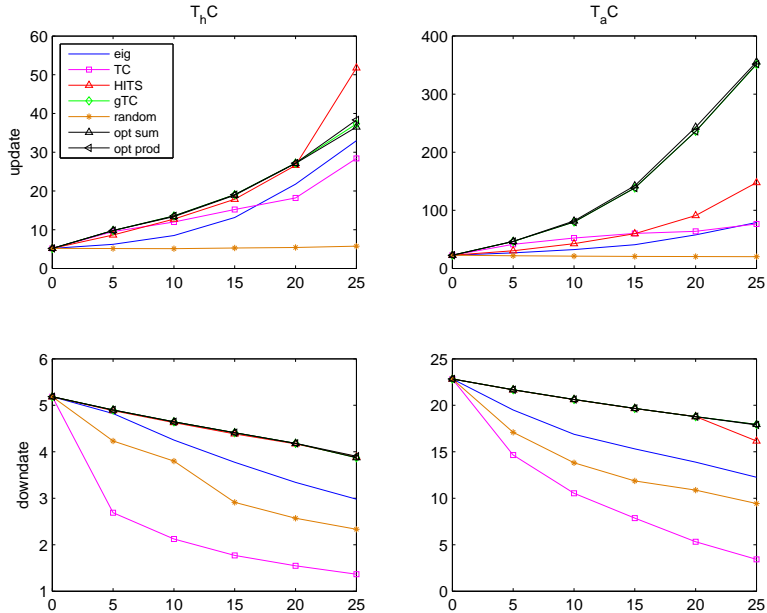}
\caption{Evolution of $T_hC$ and $T_aC$ for the network GD95b when 25 
edge modifications are performed working on the 
matrix $A$ associated with the digraph: updates (top) and
 downdates (bottom).}
\label{fig:small}
\end{figure}

The small network is used to compare the effectiveness of the proposed
heuristics with a ``brute force" approach where each virtual edge is
added in turn and the change in total communicability is monitored in
order to find the ``optimal" choice. Since we are tracking not one but two
quantities, $T_hC(A)$ and $T_aC(A)$, we monitor both
$T_hC(A) + T_aC(A)$ and $T_hC(A)\cdot T_aC(A)$ and choose the optimal
edge for either one of them. 
 These methods are labeled as {\tt opt sum} and {\tt opt prod}, respectively.
We perform a similar set of experiments
for the downdating. As a baseline method, we also report results for a
random   selection of the edges in all our tests. 
The random methods are labeled as {\tt random} or {\tt b:random}, depending on 
whether we work on the matrix $A$ or on $\mathscr{A}$. 

In Fig.~\ref{fig:small} we show plots of the total communicabilities $T_hC$ and $T_aC$  
when up to $K=25$ edge modifications are performed. 
We limit ourselves
to the results for the heuristics based on the original digraph (matrix $A$).
The results show that the heuristic {\tt gTC} performs as well as the ``optimal"
choice based on brute force, while of course being much less expensive, in tackling both the updating and downdating problem. 
Note moreover that the performance of the methods {\tt HITS} and {\tt gTC} 
is different for this network. 
This result agrees with what one would expect, in view of the small gap 
$\sigma_1-\sigma_2$ of the adjacency matrix under study.
When considering the problem of downdating, on the other hand, all 
the methods perform well. In particular we want to stress again 
the excellent performance of the method {\tt gTC}.  
The only exception is perhaps the heuristic {\tt eig}, whose performance for the first 5 steps is comparable with the random choice.  
This result confirms our  
claim that this heuristic, which was shown in~\cite{AB14} to work
very well for undirected networks, is not a good approach in the directed case.

\begin{figure}[t]
\centering
\includegraphics[width=.83\textwidth]{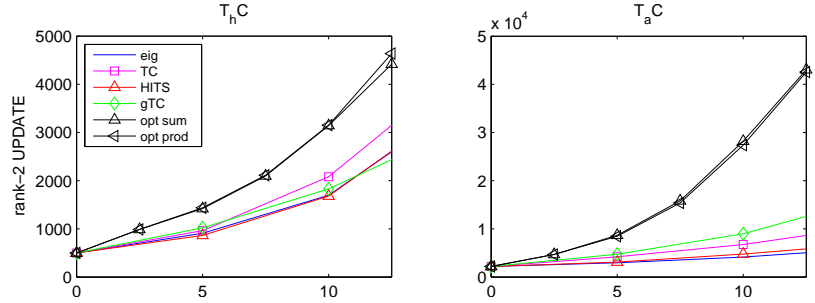}
\caption{Evolution of $T_hC$ and $T_aC$ for the network GD95b when 25 
symmetric edge modifications are performed working on the 
matrix $A$ associated with the digraph. The optimal methods refer to the rank-one selection of the modifications}
\label{fig:small2}
\end{figure}

In Fig.~\ref{fig:small2} we display the evolution of the total communicability indices under rank-two updates. 
In this plot we retain the same names for the techniques as used in case of the rank-one modifications; 
however, the quantities used to derive the rankings 
are defined as in subsection~\ref{ssec:rank2}. 
In this figure, each step corresponds to a rank-two symmetric modification, for 
the heuristic based on the edge centrality measures, and to two rank-one modifications, for the optimal methods. 
Thus, the plots for the optimal methods coincide with those in Fig.~\ref{fig:small}. 
The results displayed in Fig.~\ref{fig:small2} tell us that the symmetric 
rank-two modifications of the matrix may not lead to results as good as those obtained with the rank-one updates. 
Indeed, for both the total hub and authority communicabilities we have at least three methods in Fig.~\ref{fig:small} that 
outperform all the methods used in Fig.~\ref{fig:small2}. 
For this reason, we have not further investigate this approach.


\begin{figure}[t]
\centering
\includegraphics[width=.83\textwidth]{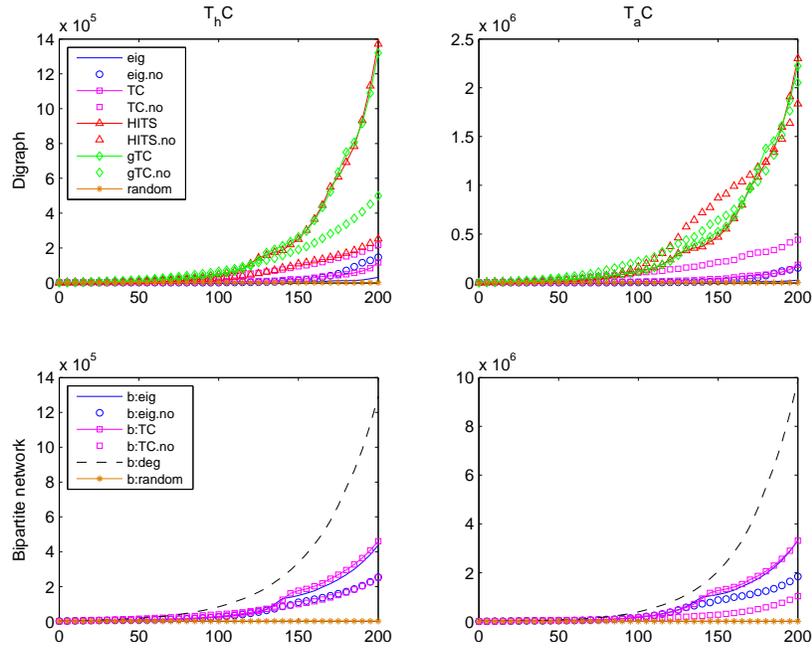}
\caption{Evolution of $T_hC$ and $T_aC$ for the network Computational Complexity when 200 updates are selected working on the 
matrix $A$ associated with the digraph (top) and
 on its bipartite version $\mathscr{A}$ (bottom).}
\label{fig:CC_up}
\end{figure}

\begin{figure}[h!]
\centering
\includegraphics[width=.83\textwidth]{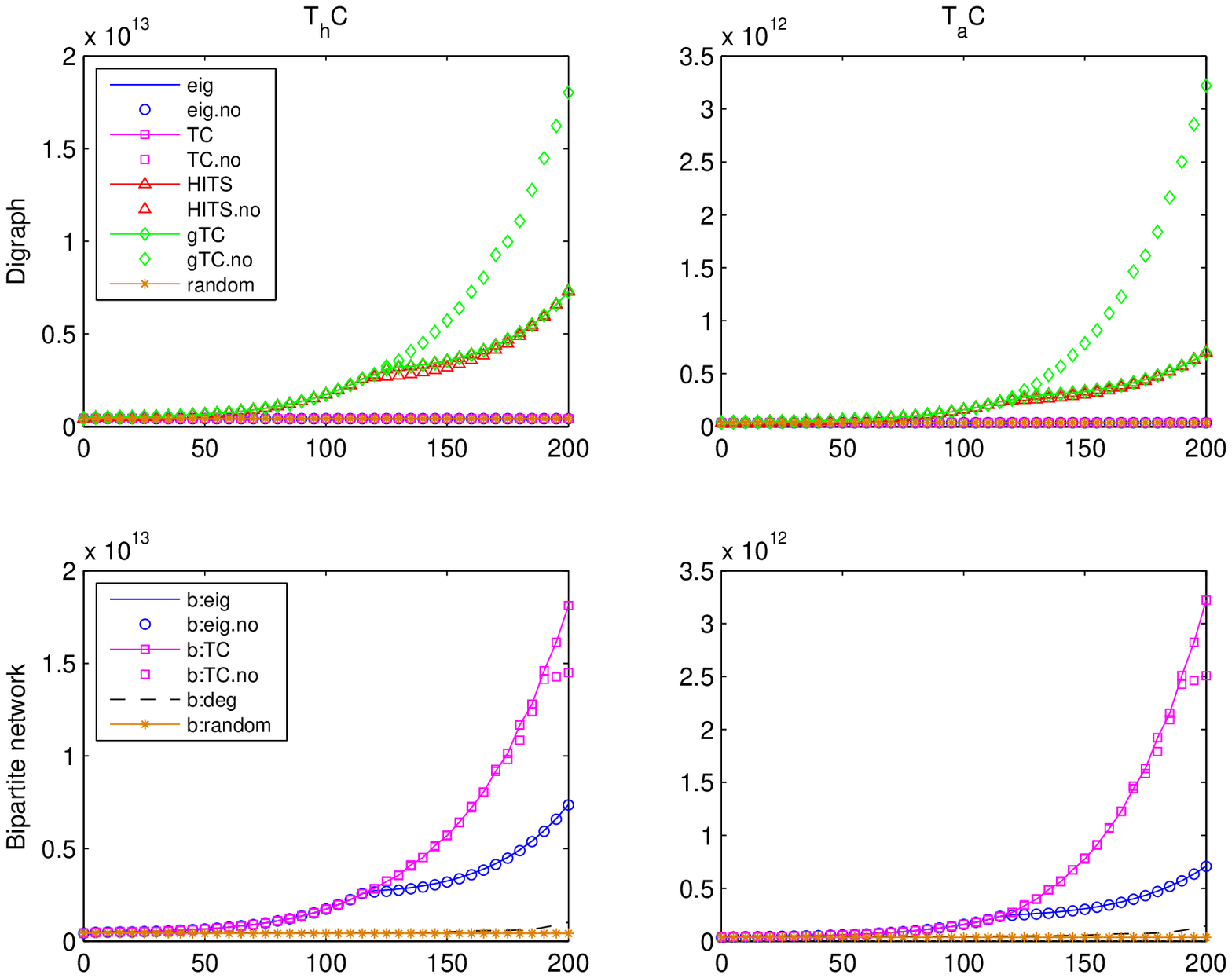}
\caption{Evolution of $T_hC$ and $T_aC$ for the network Abortion when 200 updates are selected working on the 
matrix $A$ associated with the digraph (top) or on its bipartite version $\mathscr{A}$ (bottom).}
\label{fig:Abortion_up}
\end{figure}

\begin{figure}[t]
\centering
\includegraphics[width=.83\textwidth]{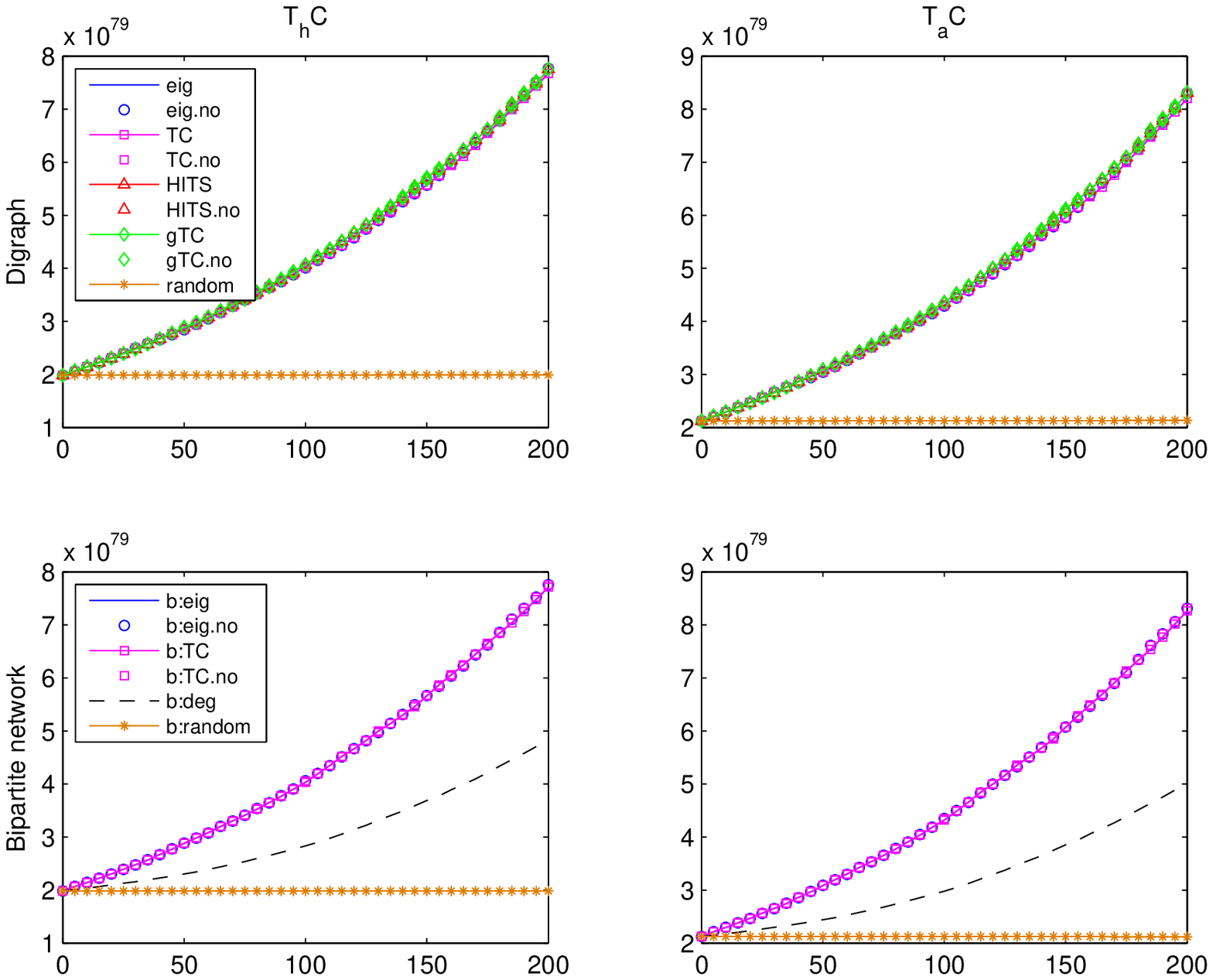}
\caption{Evolution of $T_hC$ and $T_aC$ for the network Twitter when 200 updates are selected working on the 
matrix $A$ associated with the digraph (top) and on its bipartite version $\mathscr{A}$ (bottom).}
\label{fig:Twitter_up}
\end{figure}

\begin{figure}[h!]
\centering
\includegraphics[width=.83\textwidth]{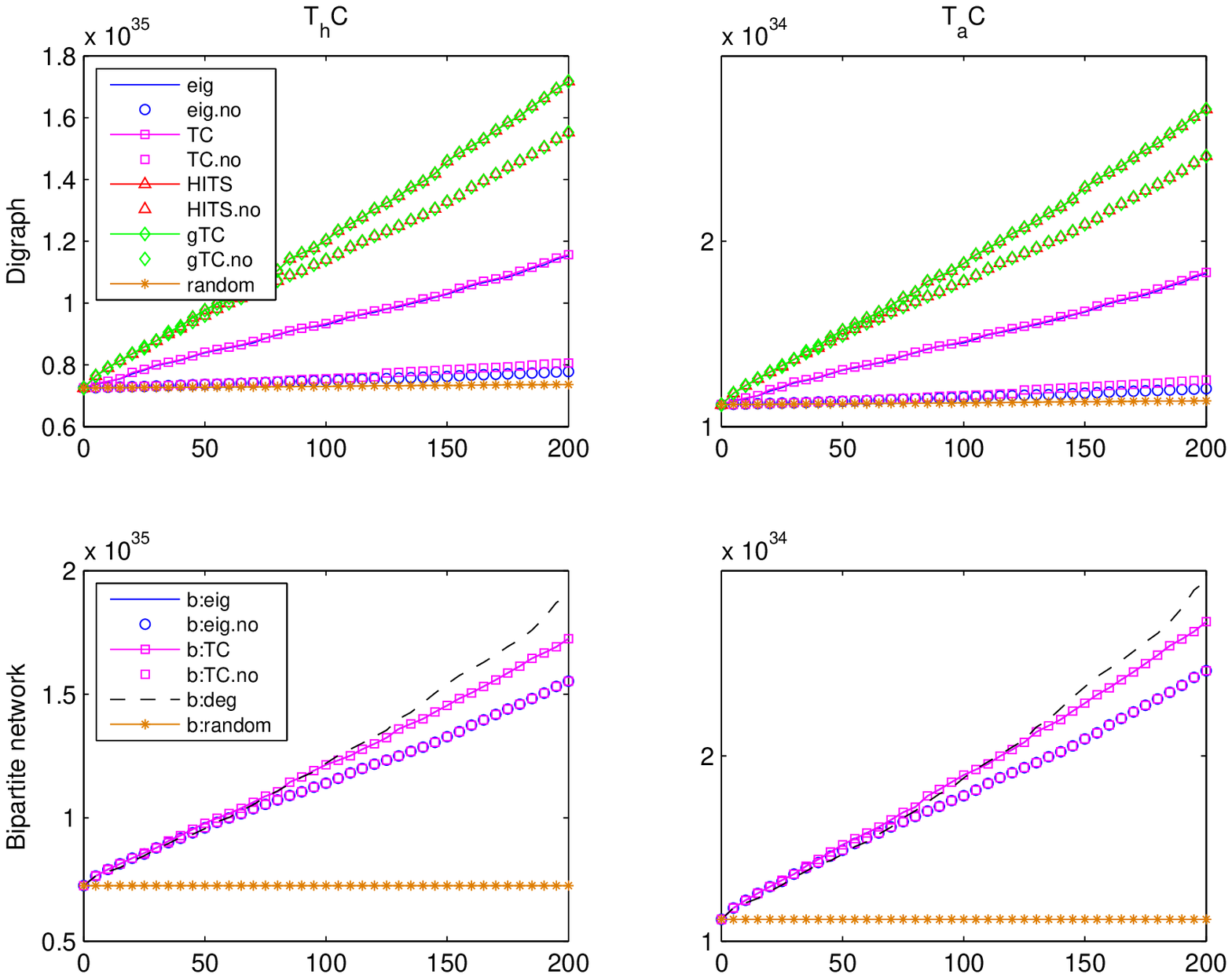}
\caption{Evolution of $T_hC$ and $T_aC$ for the network cit-HepTh when 200 updates are selected working on the 
matrix $A$ associated with the digraph (top) or on its bipartite version $\mathscr{A}$ (bottom).}
\label{fig:cit-HepTh_up}
\end{figure}

The results on the small network give us confidence that at least some of our proposed
heuristic do a very good job at enhancing the communicability properties of digraphs.
In the remaining tests we concentrate on the larger four networks, for which the ``optimal,"
brute force approaches are not practical.
All experiments were performed using MATLAB Version 8.0.0.783 (R2012b)
on an IBM ThinkPad running Ubuntu 14.04 LTS, a 2.5 GHZ Intel Core i5 processor, and
3.7 GiB of RAM.

Figs.~\ref{fig:CC_up}-\ref{fig:cit-HepTh_up} display the evolution of the total hub and communicability centrality (rescaled by the 
number of edges in the network)  
when $K=200$ updates are selected using the criteria previously introduced. 
The plots at the top of each figure display the evolution of the total hub communicability (left) and total authority communicability (right) when the digraph is modified using the techniques developed for the directed case. 
The bottom plots show the evolution of the two indices obtained when 
the modifications are selected by working on $\mathscr{A}$. 
As expected, the proposed heuristics are dramatically better
than the random choice.

The results show that the heuristics {\tt b:eig(.no)} and {\tt b:TC(.no)} perform
similarly to {\tt HITS(.no)} and 
{\tt gTC(.no)}. 
The methods {\tt eig(.no)} and {\tt TC(.no)} display erratic behavior
and often perform very poorly,  
as shown in Figs.~\ref{fig:CC_up},~\ref{fig:Abortion_up}, and~\ref{fig:cit-HepTh_up}. 
The method {\tt eig(.no)} also suffers from the restriction that the 
dominant eigenvalue must
be simple, which is not always true in practice.
Likewise,
the performance of {\tt b:deg} is generally unsatisfactory,
with the exception of $T_aC(A)$ for 
the network Computational
Complexity where
 it outperforms the other techniques 
(see Fig.~\ref{fig:CC_up}). 
Overall, considering also the timings (see Table \ref{tab:time_Updt}),
the best performance is displayed by the heuristics {\tt gTC(.no)} and {\tt HITS(.no)} and by their undirected counterparts {\tt b:TC(.no)} and {\tt b:eig(.no)}. 
The only possible exception is the Computational Complexity network, for which the heuristics 
for the directed case outperform those for the undirected, bipartite counterpart.

The disagreement between the results for the heuristics labeled {\tt HITS} 
and {\tt b:eig} for the network Computational Complexity is at first sight puzzling. The two 
criteria should lead to the same edge selection and therefore to the same
results, since 
the principal eigenvector of $\mathscr{A}$ is ${\bf q}_1=({\bf u}_1^T, {\bf v}_1^T)^T$ 
and thus $q_1(i)=u_1(i)$ and $q_1(j')=v_1(j)$ in the definition of the 
heuristic {\tt b:eig}. 
However, if at least two edges have the same centrality score when 
working with {\tt b:eig} and {\tt HITS}, then the two methods may select 
different edges. In this case, after the edge modification has been performed, 
the adjacency matrices manipulated by the two methods are different, 
thus causing the difference we observe in Fig.~\ref{fig:CC_up}.
The difference will be more pronounced if the tie between edges occur
at the beginning of the modification process.

\begin{table}[t]
\footnotesize
\centering
\caption{Timings in seconds when $K=200$ updates are selected for the networks in our Dataset using the methods described.}
\label{tab:time_Updt}
\begin{tabular}{lcccc}
\hline
                 &     Computational        &          &         &           \\
                 &      Complexity          & Abortion & Twitter & cit-HepTh \\
\hline
{\tt eig}        &       12.51              &   53.27  &  139.82 &   217.12  \\
{\tt eig.no}     &        0.13              &    0.73  &    1.75 &     1.33  \\
{\tt TC}         &      114.67              &   62.22  &  187.22 &   163.55  \\
{\tt TC.no}      &        0.61              &    0.76  &    2.19 &     1.02  \\
{\tt HITS}       &        8.35              &   50.69  &  133.50 &    88.82  \\
{\tt HITS.no}    &        0.09              &    0.63  &    1.69 &     0.67  \\
{\tt gTC}        &       10.63              &   59.31  &  183.48 &   205.17  \\
{\tt gTC.no}     &        0.12              &    0.68  &    1.77 &     1.28  \\
{\tt b:eig}      &        9.35              &   52.43  &  134.03 &    99.70  \\
{\tt b:eig.no}   &        0.21              &    0.69  &    1.66 &     0.88  \\
{\tt b:deg}      &       11.00              &   85.06  &  256.66 &    84.95  \\
{\tt b:TC}       &       11.39              &   59.31  &  154.97 &   139.50  \\
{\tt b:TC.no}    &        0.11              &    0.72  &    1.67 &     0.82  \\
\hline
\end{tabular}
\end{table}

Table~\ref{tab:time_Updt} contains the timings (in seconds) employed for the 
selection of the $K=200$ virtual edges to be updated.
The heuristics used were implemented using mostly built-in MATLAB functions, such
as the function {\tt eigs} used for computing the largest eigenvalue.
For the heuristics requiring the computation of a matrix function times a vector
we used the code {\tt funm\_kryl} by S.~G\"uttel \cite{Guettel}.
To implement the degree-based heuristic we wrote our own code, which is far
from optimal when compared to the other ones. The relatively high timings reported
for this heuristic can likely be reduced with a more careful implementation.
When interpreting the results, it has to be kept in mind that the size $\tau$ of the set of virtual edges can be pretty large (cf. Table~\ref{tab:dataset}). 
We have observed that, for all the methods, roughly half of 
the reported computing time
is spent in the computation of the products used in the definitions of the edge centrality measures. Nevertheless, the timings range from very small
to moderate in all cases, showing the feasibility of the proposed heuristics. 

Among all the methods we tested on directed networks for the updating problem, the best performance is displayed by {\tt HITS(.no)}, {\tt gTC(.no)}, {\tt b:eig(.no)} and {\tt b:TC(.no)}  
with the methods that manipulate $A$ having the edge when $\sigma_1-\sigma_2$ is small. 
Due to its erratic behavior, we cannot recommend the use of {\tt b:deg} in general.


\begin{figure}[t!]
\centering
\includegraphics[width=.83\textwidth]{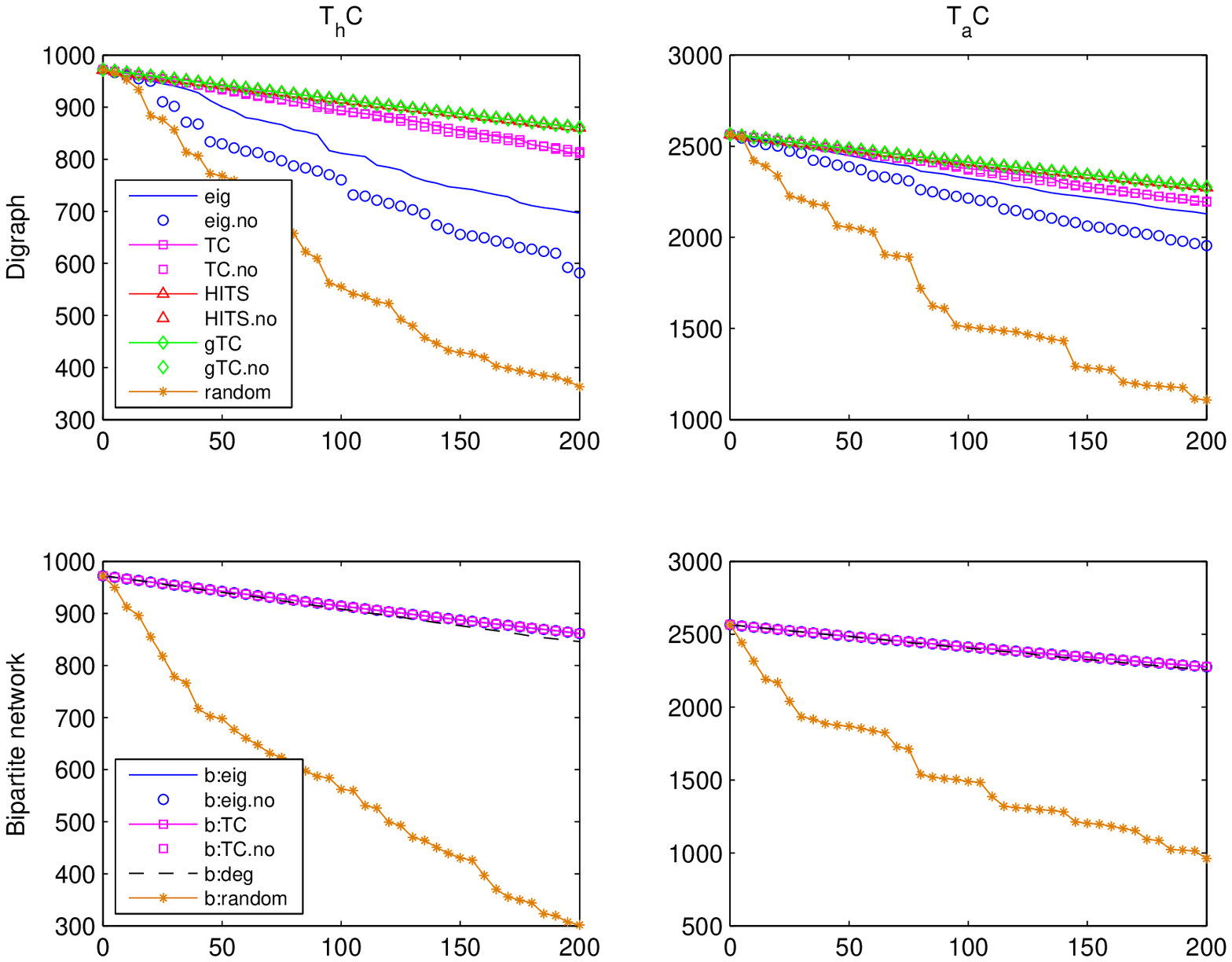}
\caption{Evolution of $T_hC$ and $T_aC$ for the network 
Computational Complexity when 200 downdates are selected working on the 
matrix $A$ associated with the digraph (top) and on its bipartite version $\mathscr{A}$ (bottom).}
\label{fig:CC_down}
\end{figure}

\begin{figure}[h!]
\centering
\includegraphics[width=.83\textwidth]{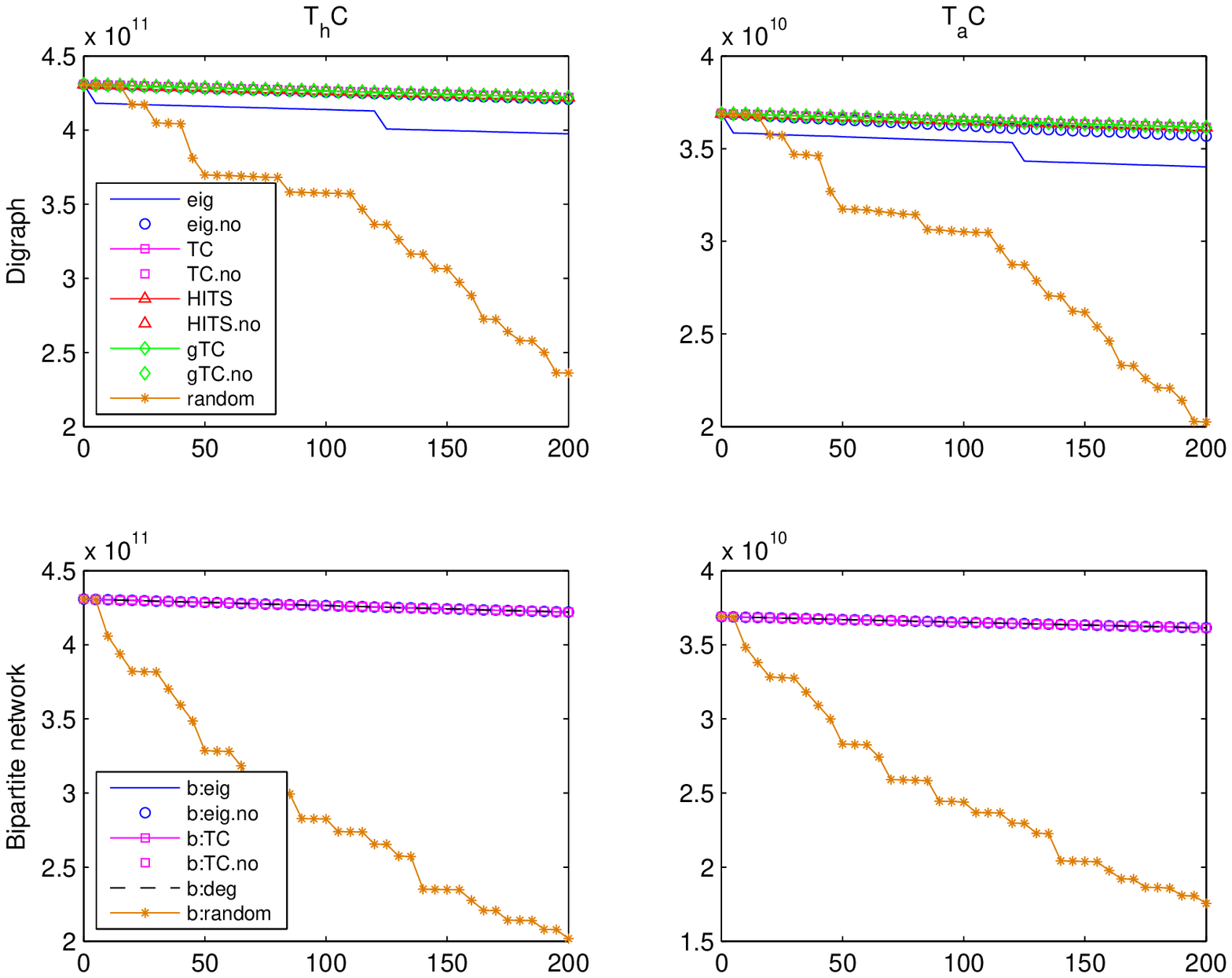}
\caption{Evolution of $T_hC$ and $T_aC$ for the network Abortion when 200 downdates are selected working on the 
matrix $A$ associated with the digraph (top) and on its bipartite version $\mathscr{A}$ (bottom).}
\label{fig:Abortion_down}
\end{figure}

\begin{figure}[t!]
\centering
\includegraphics[width=.83\textwidth]{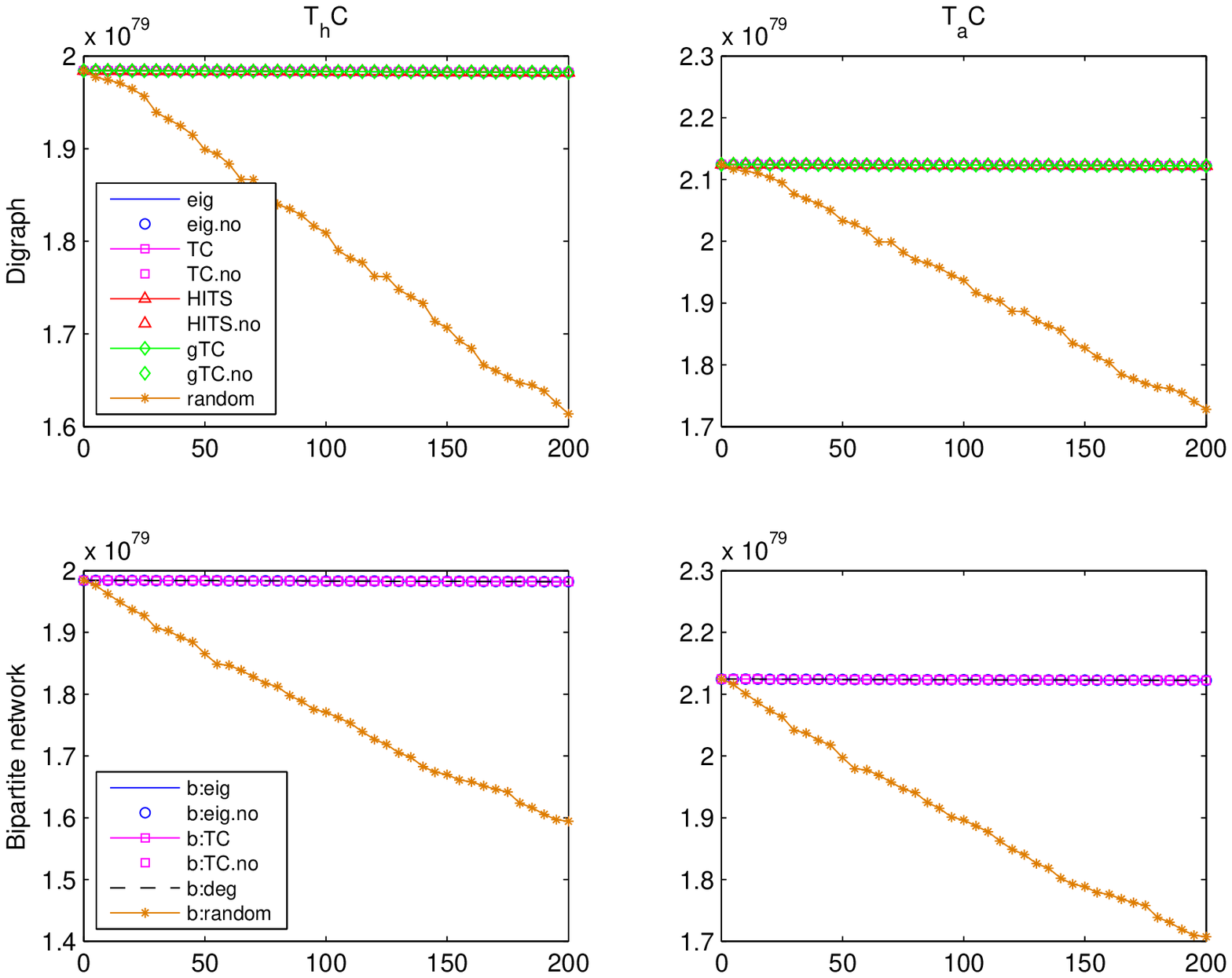}
\caption{Evolution of $T_hC$ and $T_aC$ for the network Twitter when 200 downdates are selected working on the 
matrix $A$ associated with the digraph (top) and on its bipartite version $\mathscr{A}$ (bottom).}
\label{fig:Twitter_down}
\end{figure}

\begin{figure}[h!]
\centering
\includegraphics[width=.83\textwidth]{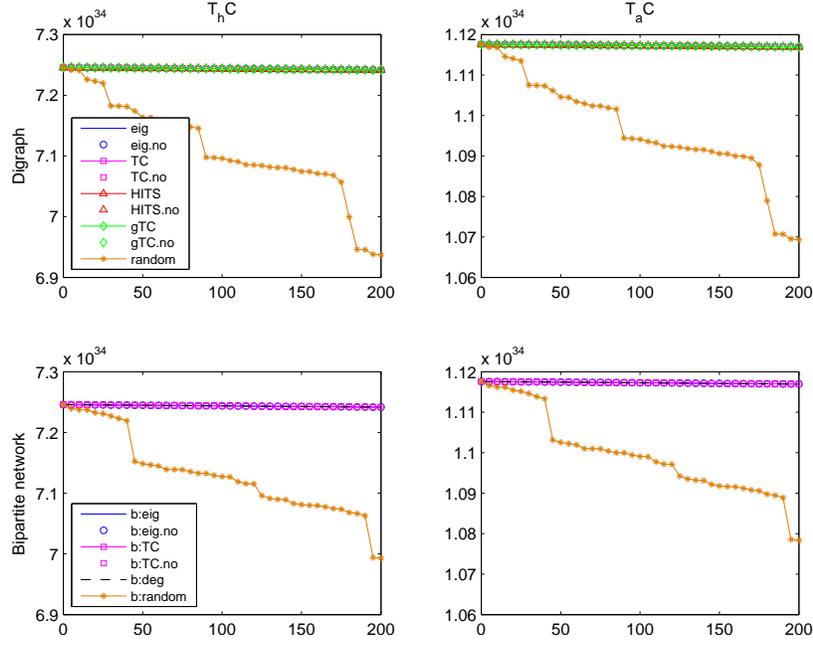}
\caption{Evolution of $T_hC$ and $T_aC$ for the network cit-HepTh when 200 downdates are selected working on the 
matrix $A$ associated with the digraph.}
\label{fig:cit-HepTh_down}
\vspace{0.1in}
\end{figure}

\begin{table}[t]
\footnotesize
\centering
\caption{Timings in seconds when $K=200$ downdates are selected for the networks in our Dataset using the methods described.}
\label{tab:time_Dwdt}
\begin{tabular}{lcccc}
\hline
                 &     Computational        &          &         &           \\
                 &      Complexity          & Abortion & Twitter & cit-HepTh \\
\hline
{\tt eig}        &        5.83              &    7.77  &   16.65 &   201.29  \\
{\tt eig.no}     &        0.04              &    0.04  &    0.05 &     1.03  \\
{\tt TC}         &      104.12              &   15.05  &   75.35 &   126.18  \\
{\tt TC.no}      &        0.92              &    0.07  &    0.39 &     0.73  \\
{\tt HITS}       &        2.72              &    4.71  &   13.32 &    63.40  \\
{\tt HITS.no}    &        0.02              &    0.02  &    0.08 &     0.34  \\
{\tt gTC}        &        5.49              &   12.06  &   60.77 &   175.02  \\
{\tt gTC.no}     &        0.04              &    0.08  &    0.29 &     0.80  \\
{\tt b:eig}      &        4.31              &    6.63  &   15.10 &    85.87  \\
{\tt b:eig.no}   &        0.03              &    0.04  &    0.05 &     1.89  \\
{\tt b:deg}      &        0.06              &    0.15  &    3.94 &     8.37  \\
{\tt b:TC}       &        5.51              &   11.66  &   39.02 &   126.44  \\
{\tt b:TC.no}    &        0.02              &    0.05  &    0.16 &     0.42  \\
\hline
\end{tabular}
\end{table}

Similar conclusions can be drawn when considering the results for the 
downdating problem, although the differences among the techniques 
are less pronounced (Figs.~\ref{fig:CC_down}--\ref{fig:cit-HepTh_down} and Table~\ref{tab:time_Dwdt}).
Indeed, the results shown 
confirm the effectiveness of the techniques based on the edge HITS and total communicability centralities and of their variants which 
do not require the recomputation of the rankings.
As in the case of the updating problem, the results returned by these two 
methods essentially reproduce those obtained when working on $\mathscr{A}$ 
using the heuristics {\tt b:eig(.no)} and {\tt b:TC(.no)}. 
 
The methods {\tt eig(.no)} and {\tt TC(.no)} perform no better
(and in some cases worse) than {\tt gTC(.no)} and {\tt HITS(.no)}, 
while {\tt b:deg} is usually outperformed by {\tt b:eig(.no)} or 
{\tt b:TC(.no)}. 

Concerning the timings, if we compare the results in Tables~\ref{tab:time_Updt} 
and~\ref{tab:time_Dwdt} we can see that the values in Table~\ref{tab:time_Updt} 
are in general higher that those in Table~\ref{tab:time_Dwdt}.  
This is easily understood in view of what we observed before, if one 
compares the number of virtual edges $\tau$ with the number of 
edges $m$ in each network in the dataset (see Table~\ref{tab:dataset}). 

While we do not provide a formal assessment of the computational cost of
the various heuristics, arguments similar to those found in \cite{AB14} indicate
that the cost of the more efficient heuristics can be expected to scale 
approximately like ${\pazocal O}(n)$ or ${\pazocal O}(n\log n)$ with the number of nodes $n$.

In conclusion, by considering the overall performance of the methods 
and their cost (in terms of timings), we find that
the best criteria for our updating/downdating goals are the methods 
{\tt HITS(.no)} and {\tt gTC(.no)}. Besides these, satisfactory
results may also be obtained using {\tt b:eig(.no)} or 
{\tt b:TC(.no)}. 
From the timings in Tables~\ref{tab:time_Updt} and~\ref{tab:time_Dwdt} we can deduce that the heuristics 
{\tt HITS(.no)} are in general slightly faster than {\tt b:eig(.no)} and may thus be preferred. 
Concerning whether it is better to use {\tt gTC(.no)} or {\tt b:TC(.no)}, we 
anticipate that the first will be preferrable when used in conjunction
with fast algorithms for the approximation of bilinear forms involving generalized matrix 
functions.

\section{Conclusions and future work}

In this work we have extended the notion of total network communicability
to directed graphs, and developed heuristics for manipulating an
existing directed network so as to enhance its communicability properties.
In doing so we made use of the concept of alternating walks,  
which allows us to take into account the dual role played by each node
in a digraph, namely, receiver and broadcaster of information. This
in turn led us in a natural way to the (rather overlooked)
concept of generalized matrix function, first introduced in \cite{HBI73}.
As shown in the paper, this concept
allows one to express various communicability
measures for digraphs in a compact form.

Our computational results indicate that the heuristics which take into account
the dual role of nodes in directed networks tend to be preferable to those
that do not. We also showed that these heuristics are very fast in
practice.

Future work will address computational issues for large-scale networks
(in particular, fast algorithms for estimating the row and column sums
of generalized matrix functions). Another avenue for future work is the
extension of the techniques in this paper and in \cite{AB14} to weighted
graphs. 
We also plan to investigate the use of our heuristics to tune 
other network properties. 
Preliminary tests suggest that our edge modification techniques are effective at 
increasing the synchronizability in directed graphs (see, e.g., \cite{ZLZ}).
A more systematic exploration of this application is left for future work.


\section*{Acknowledgements}
We are indebted to two anonymous referees for helpful suggestions.
The first author would like to thank Emory University for the hospitality 
offered in 2015, when part of this work was completed. 


\end{document}